\def\subsectiontitle{}
\def\subsubsectiontitle{}
\def\ps@pprintTitle{%
	\let\@oddhead\@empty
	\let\@evenhead\@empty
	\def\@oddfoot{\emph{Very preliminary version}\hfill\emph{This draft: \today}}%
	\let\@evenfoot\@oddfoot}
\newsavebox\extrainfobox
\crefname{problem}{Problem}{Problems}
\let\oldfootnote\footnote
\renewcommand\footnote[1]{\oldfootnote{\hspace{.4mm}#1}}
\renewenvironment{proof}[1][\proofname] {\par\pushQED{\qed}\normalfont\topsep6\p@\@plus6\p@\relax\trivlist\item[\hskip\labelsep\bfseries#1\@addpunct{.}]\ignorespaces}{\popQED\endtrivlist\@endpefalse}
\let\oldFootnote\footnote
\newcommand\nextToken\relax
\renewcommand\footnote[1]{%
	\oldFootnote{#1}\futurelet\nextToken\isFootnote}
\newcommand\isFootnote{%
	\ifx\footnote\nextToken\textsuperscript{,}\fi}
\DeclareFontFamily{U}{mathb}{\hyphenchar\font45}
\DeclareFontShape{U}{mathb}{m}{n}{
	<-6> mathb5 <6-7> mathb6 <7-8> mathb7
	<8-9> mathb8 <9-10> mathb9
	<10-12> mathb10 <12-> mathb12
}{}
\DeclareSymbolFont{mathb}{U}{mathb}{m}{n}
\DeclareMathSymbol{\llcurly}{\mathrel}{mathb}{"CE}
\DeclareMathSymbol{\ggcurly}{\mathrel}{mathb}{"CF}
\def\gg{\ggcurly}
\def\w{\omega}
\def\a{\mathcal{A}}
\def\o{ o^*}
\def \r{\mathcal{R}}
\newtheorem{definition}{Definition}
\newtheorem{theorem}{Theorem}
\newtheorem*{theorem*}{Theorem}
\newtheorem{proposition}{Proposition}
\newtheorem{lemma}{Lemma}
\newtheorem{example}{Example}
\newtheorem{corollary}{Corollary}
\newtheorem{remark}{Remark}
\newtheorem{assumption}{Assumption}
\newcommand{\norm}[1]{\| #1 \|}
\begin{document}

\title{Strategy-proof allocation with outside option\thanks{I am grateful to seminar and conference participants in Waseda University, Shanghai University of Finance and Economics, 2018 Nanjing International Conference on Game Theory, and 2019 Conference on Economic Design in Budapest. This paper is supported by National Natural Science Foundation of China (Grant 71903093 and Grant 72033004). An earlier version was circulated under the title ``Truncation-invariant allocation.''
}}

\author{Jun Zhang\\
	\small 
	Institute for Social and Economic Research, Nanjing Audit University\\\small Email: zhangjun404@gmail.com}


\maketitle

\begin{abstract}
	Strategy-proof mechanisms are widely used in market design. In an abstract allocation framework where outside options are available to agents, we obtain two results for strategy-proof mechanisms. They provide a unified foundation for several existing results in distinct models and imply new results in some models. The first result proves that, for individually rational and strategy-proof mechanisms, pinning down every agent's probability of choosing his outside option is equivalent to pinning down a mechanism. The second result provides a sufficient condition for two strategy-proof mechanisms to be equivalent when the number of possible allocations is finite. 
\end{abstract}

\noindent \textbf{Keywords}: strategy-proofness; outside option; market design; stochastic dominance

\noindent \textbf{JEL Classification}: C71, C78, D78

\thispagestyle{empty}
\setcounter{page}{0}
\newpage

\section{Introduction}

In mechanism design, strategy-proof mechanisms are desirable because they incentivize agents to reveal private types.  This paper considers allocation problems where agents' private types are their preferences and they can choose outside options if they wish. Such problems are prevalent in market design and other economic environments. In marriage markets  men and women have the outside option of being alone \citep{gale1962college}. In school choice  children have the outside option of attending private schools \citep{abdulkadiroglu2003school}. In auction markets  bidders have the outside option of not bidding \citep{vickrey1961counterspeculation}. This paper proves two simple but useful results for strategy-proof mechanisms in such problems. Our results treat deterministic mechanisms and random mechanisms in a unified way. They provide a unified foundation for several existing results that are derived by ad hoc approaches in distinct market design models. Uncovering this foundation not only deepens our understanding of these models and results, but also provides new tools and terminologies for future studies.

Specifically, we provide a unified foundation for the following results.
\begin{itemize}
	\item In a generalized object assignment model, \cite{sonmez1999strategy} proves that there exists an individually rational, Pareto efficient and strategy-proof mechanism only if the core is essentially single-valued and the mechanism is core-selecting when the core is nonempty. This finding unifies the results in several distinct models, including housing market, marriage and roommate problem, coalition formation problem, and social network. \cite{ehlers2018strategy} extends this finding to a superset of the core called the individually-rational-core. 
	
	\item In the object assignment model, \cite{erdil2014strategy} proves that, among individually rational and strategy-proof random mechanisms, one mechanism stochastically dominates another mechanism only if the former assigns more probability shares of objects in total to agents than the latter. It implies that non-wasteful and strategy-proof mechanisms cannot be strategy-proof dominated.
	
	\item In the many-to-one matching with contracts model, under very weak assumptions \cite{hirata2017stable} prove that there is at most one stable and strategy-proof mechanism, and the doctor-optimal stable mechanism, if it exists, is the only candidate. They also propose a notion of non-wastefulness (different than the usual one in the literature) and prove that any non-wasteful and strategy-proof mechanism is not Pareto dominated by any other individually rational and strategy-proof mechanism.
	
	\item In an abstract allocation model, \cite{alva2017strategy} prove that, in the class of individually rational and strategy-proof mechanisms, one mechanism Pareto dominates another if and only if the set of participants in the allocations found by the former mechanism is a superset of participants in the allocations found by the latter mechanism. This result has useful implications in models including object assignment, school choice, excludable public goods, and transferable utility.
	
	\item In a teacher reassignment model, \cite{combe2020design} propose a class of mechanisms of desirable properties, and prove that a specific mechanism they recommend is the only strategy-proof member of the class when the matching is one-to-one.
\end{itemize} 
Among the above results, Erdil's is about random allocation mechanisms in the object assignment model, while the others are about deterministic allocation mechanisms in other models. These results seem to be unrelated to each other. But we show that there is a unified logic behind all of them. 

We present the logic in an abstract allocation model that has been used by \cite{alva2017strategy}. The model consists of finite agents and a set of allocations. It is abstract because it does not explain what allocations mean. It only requires that each allocation have a well-defined set of participants (non-participants consume outside options), and agents have ordinal preferences over allocations. After eliciting agents' preferences, a deterministic mechanism returns an allocation, while a random mechanism returns a randomization over allocations. To treat the two types of mechanisms in a unified way, we present our results for random mechanisms and regard deterministic mechanisms as special cases. With agents' ordinal preferences, we use (first-order) stochastic dominance to make welfare comparison between random allocations/mechanisms. A mechanism is \textit{strategy-proof} if, for every agent, the random allocation under truth-telling stochastically dominates the random allocation under manipulation. A mechanism is \textit{individually rational} (IR) if it never selects an allocation that makes some agents worse off than consuming outside options with a positive probability.

Our first result, Theorem \ref{thm1}, proves that if agents can  freely vary the ranking of outside options in their preferences and are never indifferent between outside options and the other allocations, then among IR and strategy-proof mechanisms, a mechanism stochastically dominates another mechanism if and only if every agent has a weakly higher probability of being a participant in the former mechanism than in the latter mechanism, and the dominance is strict if and only if the participation probability is strictly higher for some agent in the former mechanism. This result looks surprising at first glance, because usually comparing welfare requires more information than comparing participation probabilities. Our insight is that, for an IR and strategy-proof mechanism, every agent's welfare in every preference profile can be recovered from his participation probability in all preference profiles after he varies the ranking of his outside option in preferences.

Theorem \ref{thm1} unifies the results of \cite{erdil2014strategy}  and \cite{alva2017strategy}(AM for short). When we restrict Theorem \ref{thm1} to deterministic mechanisms, it coincides with AM's theorem. While if we apply our framework to the object assignment model, an agent's participation probability in a random allocation is the total probability shares of objects he receives. Then Theorem \ref{thm1} means that, for IR and strategy-proof random mechanisms in the object assignment model, one mechanism stochastically dominates another mechanism if and only if the former mechanism assigns weakly more probability shares of objects to every agent, and the dominance is strict if and only if the former mechanism assigns strictly more probability shares of objects to some agent. This ``if and only if'' result is more precise than Erdil's result that strategy-proof dominance requires assigning more probability shares of objects in total to agents. We also derive properties weaker than non-wastefulness that are sufficient for an IR and strategy-proof mechanism to be undominated by other strategy-proof mechanisms.

Our second result, Lemma \ref{thm2}, presents a sufficient condition for two strategy-proof mechanisms to be equivalent when the set of allocations in our model is finite. With a precise statement left to the paper, it roughly proves that, when agents can freely vary the ranking of outside options in their preferences, for any two strategy-proof mechanisms, if whenever they assign equal probabilities to the allocations in the ``upper part'' of agents' preferences, their outcomes are equivalent, then the two mechanisms must be equivalent. Lemma \ref{thm2} will be useful in applications  for proving uniqueness of strategy-proof mechanisms among a class of mechanisms of certain desirable properties. As we show, it directly implies the uniqueness results proved by \cite{hirata2017stable} and  \cite{combe2020design} in their respective models. Although Lemma \ref{thm2} does not directly imply the results of \cite{sonmez1999strategy} and \cite{ehlers2018strategy}, it is the key to the proof of their results.

Our two results have a unified foundation. When agents vary the ranking of outside options in their preferences, a type of strategies attracting our attention is called \textit{contraction}: an agent moves his outside option upwards in preferences to a position that is immediately below an allocation (called \textit{contraction point}) and does not change the upper contour set of the allocation. That is, allocations that are weakly above than the contraction point remain weakly above than the contraction point in the contracted preferences. A special case of contraction is known as \textit{truncation} \citep{roth1999truncation}: an agent improves the ranking of his outside option without changing preferences over the other allocations.
In a strategy-proof mechanism, if an agent contracts preferences, because the upper contour set of the contraction point is invariant, the total probability mass of the upper contour set in the allocations found by the mechanism must also be \textit{invariant}. We call this property \textit{contraction-invariance}; when we only consider truncation, we call it \textit{truncation-invariance}. Contraction-invariance connects the allocations in difference preference profiles, which is the key to our results.  If agents can freely truncate preferences, our results hold for non-strategy-proof mechanisms that are truncation-invariant. In the matching with contracts model of \cite{hirata2017stable}, we prove that if the doctor-optimal stable mechanism exists, it must be truncation-invariant, although it may not be strategy-proof. This is enough to invoke Lemma \ref{thm2} to prove that the doctor-optimal stable mechanism is the only candidate for a stable and strategy-proof mechanism, if it exists.

Besides unifying existing results, we also provide a new application of Lemma \ref{thm2}. In a market such as kidney exchange and refugee resettlement, being unmatched is a very undesirable outcome for agents, and therefore a market designer often wants to increase the number of acceptable assignments in the market. When a mechanism is in use, to replace it with a new one, for political reasons we often need to respect agents' welfare in the existing mechanism. To model this, we propose a \textit{size improvement} relation between an existing mechanism and a new one in the object assignment model.  It requires that the new mechanism assign strictly more acceptable objects in total to agents in at least one preference profile, and when it does not do so in some preference profile, every agent who obtains an acceptable object in the existing mechanism still obtain an acceptable object in the new mechanism. Applying Lemma \ref{thm2}, we prove that size improvement is impossible if the two mechanisms are truncation-invariant and the existing mechanism is not too inefficient. This result is applicable to well-studied mechanisms in the literature, including strategy-proof mechanisms such as deferred acceptance, top trading cycle (and its generalizations), and serial dictatorship, as well as non-strategy-proof but truncation-invariant mechanisms such as the boston mechanism \citep{abdulkadiroglu2003school} and probabilistic serial \citep{bogomolnaia2001new}.

After discussing related literature in the rest of this section, we present our model in Section \ref{section:model}, our results and their corollaries in Section \ref{section:theorems}, and applications in Section \ref{section:application}. For ease of exposition, we assume a finite set of allocations in our model. This assumption is relied upon by Lemma \ref{thm2} and satisfied by our applications. In Remark \ref{rmk:thm1:infinite}, we explain why Theorem \ref{thm1} holds without change when the set of allocations is infinite after relevant concepts are properly rephrased. We conclude the paper in Section \ref{section:conclusion}. Appendix contains omitted proofs and results.

\paragraph{Related literature.} Outside options enrich agents' manipulation strategies and strengthen the restriction of strategy-proofness. Our results reflect this strengthened restriction. Several results in the literature have similarly relied on the existence of outside options. In school choice with coarse priorities, \cite{abdulkadiroglu2009strategy} prove that when students have outside options, no strategy-proof deterministic mechanism Pareto improves the deferred acceptance algorithm (after ties are broken). As \cite{alva2017strategy} have explained, this result is implied by Theorem \ref{thm1}. \cite{kesten2016outside} prove that even if exogenous outside options are absent, as long as students can arbitrarily rank schools in preferences, some schools will endogenously  play the role of outside options.\footnote{When priorities are strict and students may not have outside options, \cite{kesten2010school} has proved that no strategy-proof and Pareto efficient deterministic mechanism Pareto improves deferred acceptance.}

When agents can freely truncate preferences, our results hold for truncation-invariant mechanisms. Various invariance notions have been proposed in the literature. In the object assignment model, \cite{hashimoto2014two} define \textit{weak truncation robustness} and use it to characterize probabilistic serial. In Remark \ref{rmk:object:truncation-invariance} we explain that it is equivalent to our truncation-invariance in the object assignment model. \cite{ehlers2014strategy,ehlers2016object} define a different invariance notion for deterministic mechanisms. It requires that if an agent truncates preferences at a point that is worse than his assignment, then the allocation for the whole market remain unchanged. This requirement is much stronger than truncation-invariance. In the matching with contracts model, \cite{hatfield2019stability} define \textit{truncation-consistency} for deterministic mechanisms. It requires that if an agent truncates preferences at a point that is worse than his assignment, then his assignment remain unchanged. Truncation-consistency essentially coincides with truncation-invariance for deterministic mechanisms.

In the object assignment model we propose a size improvement relation between a mechanism that is in use and a different mechanism that is supposed to replace the former. \cite{Afacandurpossible} (and also \cite{Afacanmaximization}) propose a different criterion to compare the number of acceptable assignments in two deterministic mechanisms. In their criterion the positions of the two mechanisms are symmetric. They prove two impossibility results in the same flavor as ours. Compared with their results, ours uses weaker axioms, allows for weak preferences, and is applicable to deterministic mechanisms as well as random mechanisms. Our results are independent of each other.

\section{Model}\label{section:model}

Our model consists of a finite set of agents $ I $  and a finite set of allocations $ \a $. We denote an agent by $ i $ or $ j $, and an allocation by $ a $ or $ b $. Each allocation $ a $ is associated with a set of \textit{participants} $ I(a)\subset I $. For each agent $ i $, $ \a_i\coloneqq \{a\in \a: i\in I(a)\} $ is the set of allocations where $ i $ is a participant, and $ i $ consumes his outside option $ \o_i $ in every $a \in \a\backslash \a_i $. 

Each agent $ i $ has a complete and transitive preference relation $ \succsim_i $ on $ \a_i \cup \{\o_i\} $. So we assume that $ i $ is indifferent between any two allocations where he consumes his outside option. Note that $ i $'s preferences over the other allocations can be strict or weak; for example, $ i $ can be indifferent between some $ a\in \a_i $ and some $ b\in \a\backslash \a_i  $. Let $ \succ_i $ and $ \sim_i $ respectively denote the asymmetric component and symmetric component of $ \succsim_i $. Let $ \mathcal{R}_i $ denote the domain of $ i $'s preferences. Let $ \succsim_I\coloneqq(\succsim_i)_{i\in I} $ denote a preference profile, and $ \mathcal{R}\coloneqq\times_{i\in I} \mathcal{R}_i $ denote the domain of preference profiles. 

An allocation $ a$ is \textit{acceptable} to $ i $ if $ a\succsim_i \o_i $, and is \textit{strictly acceptable} to $ i $ if $ a\succ_i \o_i $. An allocation $ a $ is \textit{individually rational (IR)} if $ a\succsim_i \o_i $ for all $ i$. An allocation $ b$ \textit{weakly Pareto dominates} another allocation $ a $ if $ b\succsim_i a $ for all $ i $; if there further exists $ j$ such that $ b\succ_j a $, then $ b $ \textit{strictly Pareto dominates} $ a $; if $ a\sim_i b $ for all $ i $, then $ a $ and $ b $ are \textit{welfare-equivalent}. An allocation is \textit{Pareto efficient} if it is not strictly Pareto dominated by any other allocation.  

\cite{alva2017strategy} have demonstrated that this abstract model can describe many economic environments. We will use the object assignment model as a running example to illustrate various concepts we define for the abstract model. Remark \ref{rmk:object:defn} briefly defines the object assignment model. 

\begin{remark}\label{rmk:object:defn}
	In the object assignment model, a finite set of indivisible objects $ O $ are assigned to a finite set of agents $ I $. Each object $ o $ has $ q_o \in \mathbb{N}$ copies. Each agent $ i $ demands a copy of an object, and has a preference relation $ \succsim_i $ on $ \tilde{O}\coloneqq O\cup\{\emptyset\}  $ where $ \emptyset $ denotes obtaining nothing. An allocation is a function $ f:I\rightarrow \tilde{O} $ such that, for all $ o\in O $, $ |f^{-1}(o)|\le q_o $. 	
	To fit it into our abstract model, let $ \emptyset $ be every agent's outside option and let the participants of an allocation be those who are assigned objects. An agent's preferences over allocations are induced by his preferences over his assignments in the allocations.
\end{remark}

For every $ a\in \a $ and $ \succsim_i\in \r_i $, we define following notations:
\begin{itemize}
	\item $ \a(\succsim_i, a)\coloneqq\{b\in \a:b\succsim_i a\} $ is the \textit{upper contour set} of  $ a $ in $ \succsim_i $;
	
	\item $ \a(\succ_i, a)\coloneqq\{b\in \a:b\succ_i a\} $ is the \textit{strict upper contour set} of $ a $ in $ \succsim_i $;
	
	\item $ \a(\sim_i, a)\coloneqq \{b\in \a: b\sim_i a\} $ is the \textit{indifference class} of $ a $ in $ \succsim_i $;
	
	\item $ \a_i(\succsim_i, a) $, $ \a_i(\succ_i, a) $, and $ \a_i(\sim_i, a) $ are similarly defined.
\end{itemize}
It is worth mentioning that $ \a(\succ_i, \o_i) $ is the set of strictly acceptable allocations for $ i $, $\a(\succ_i, \o_i)=\a_i(\succ_i, \o_i)  $, and $ \a_i(\sim_i,\o_i) $ is the set of allocations that $ i $ participates in and regards as indifferent with $ \o_i $. For every $ \succsim_i\in \r_i $, we also define
\[
\a_i(\gg_i,\o_i)\coloneqq\{a\in \a_i:\exists b\in\a_i \text{ such that }a\succ_i b\succsim_i \o_i  \}
\] and call it  the \textit{upper part} of $ \a_i(\succsim_i,\o_i) $.
If $ \a_i(\sim_i,\o_i) \neq\emptyset$, then $ \a_i(\gg_i,\o_i)=\a_i(\succ_i,\o_i) $, while if  $ \a_i(\sim_i,\o_i) =\emptyset$, then $ \a_i(\gg_i,\o_i) $ is the set of strictly acceptable allocations that are not immediately above $ \o_i $ in $ \succsim_i $. Understanding $ \a_i(\gg_i,\o_i) $ is crucial for understanding our analysis.

\subsection{Random allocation}\label{section:random:allocation}

We regard every element of $ \a $ as a \textit{deterministic} allocation. A \textit{random allocation} is a probability distribution on $ \a $. In a random allocation $ p $, $ p_a $ is the probability of every $ a\in \a $, and $ p[A]\coloneqq \sum_{a\in A}p_a $ is the probability of every $ A\subset \a $. Let $ \Delta(\a) $ denote the set of random allocations. Because every deterministic allocation can be regarded as a degenerate random allocation, we make the convention that $ \a\subset \Delta(\a)$ and simply call the elements of $ \Delta(\a) $ \textit{allocations} in the rest of the paper. To avoid confusion we will call elements of  $ \a $ \textit{deterministic allocations}. 

An allocation $ p $ is \textit{individually rational (IR)} if, for every $ a $ with $ p_a>0 $, $ a\succsim_i \o_i $ for every $ i $. In other words, $ p $ is IR if and only if $p[\a(\succsim_i, \o_i)]=1  $ for every $ i $.

Given agents' ordinal preferences over deterministic allocations, we use (first-order) stochastic dominance to compare random allocations.
\begin{definition}[Stochastic dominance]
	For any two allocations $ p $ and $ p' $,
	\begin{enumerate}
		\item $ p' $ \textbf{weakly stochastically dominates} $ p $ \textbf{for agent $ i $}, denoted by $ p'\succsim^{sd}_i p$, if
		\[
	\sum_{b\succsim_i a}p'_a \ge \sum_{b\succsim_i a}p_a \text{ for all }a\in \a.
		\]
		If the inequality is strict for some $ a $, then $ p' $ \textbf{strictly stochastically dominates} $ p $ \textbf{for $ i $}, denoted by $ p'\succ^{sd}_i p$.	
	   If no inequalities are strict, then $ p' $ and $ p $ are \textbf{welfare-equivalent for $ i $}, denoted by $ p'\sim^{sd}_i p$.
		
		\item $ p' $ \textbf{weakly stochastically dominates} $ p $ if $ p'\succsim^{sd}_i p $ for all $ i $. 		
		If there exists $ j$ such that $ p'\succ^{sd}_j p $, then $ p' $ \textbf{strictly stochastically dominates} $ p $.
		If $ p'\sim^{sd}_i p $ for all $ i $, then $ p' $ and $ p $ are \textbf{welfare-equivalent}.
	\end{enumerate}
\end{definition}

For every allocation $ p $ and agent $ i $, $ p[\a_i]$ is $ i $'s probability of being a participant in $ p $. This probability will play an important role in Theorem \ref{thm1}.

\begin{definition}[Participation size]
	For every allocation $ p $ and agent $ i $, $ p[\a_i]$ is called \textbf{$ i $'s participation size} in $ p $.
\end{definition}

In a deterministic allocation, an agent's participation size is one if he is a participant, and otherwise it is zero.

\begin{remark}\label{rmk:object:randomallocation}
	In the object assignment model, thanking the Birkhoff-von Neumann theorem,  a random allocation can be represented by a nonnegative matrix $ p=(p_{i,o})_{i\in I, o\in \tilde{O}} $ such that $ \sum_{i\in I}p_{i,o}\le q_o $ for all $ o\in O $ and $ \sum_{o\in \tilde{O}}p_{i,o}= 1 $ for all $ i\in I $.
	Every $ p_{i,o} $ denotes the probability that $ i $ obtains a copy of $ o $, and $ p_i=(p_{i,o})_{o\in \tilde{O}} $ is the lottery assigned to $ i $. Every $ i $'s participation size in a random allocation $ p $ is the total probability share of objects $ i $ receives (i.e., $ \sum_{o\in O}p_{i,o} $).
\end{remark}

We compare agents' participation sizes in different allocations.

\begin{definition}[Size dominance]
	For any two allocations $ p $ and $ p' $, $ p' $ \textbf{weakly size dominates} $ p $ if $ p'[\a_i]\ge  p[\a_i] $ for all $ i $. If the inequality is strict for some $ i $, then $ p' $ \textbf{strictly size dominates} $ p $. If $ p'[\a_i]=  p[\a_i] $ for all $ i$, then $ p' $ and $ p $ are \textbf{size-equivalent}.
\end{definition}

For two deterministic allocations $ a $ and $ b $, $ a $ weakly size dominates $ b $ if $ I(a)\supset I(b) $, and they are size-equivalent if $ I(a)=I(b) $.

Size dominance is independent of stochastic dominance, because the former does not refer to agents' preferences. But for IR allocations, an agent's participation size contains information about his welfare. Specifically, for every IR allocation $ p $ and every $ i $, 
\[
p[\a_i]=p[\a_i(\succ_i,\o_i)]+p[\a_i(\sim_i,\o_i)].
\]
Note that $ p[\a_i(\succ_i,\o_i)] $ is the probability that $ i $ is better off than consuming his outside option. So $ p[\a_i] $ is an upper bound on $ p[\a_i(\succ_i,\o_i)] $. If $ \a_i(\sim_i,\o_i)=\emptyset $, then $ p[\a_i] $ is equal to $ p[\a_i(\succ_i,\o_i)] $. If $ \a_i(\sim_i,\o_i)=\emptyset $ for all $ i $, then weak stochastic dominance implies weak size dominance (yet strict stochastic dominance still does not imply strict size dominance). In environments where agents regard allocations they participate in as fundamentally different from outside options, $ \a_i(\sim_i,\o_i)=\emptyset $ holds. For example, kidney patients may regard compatible donors as indifferent, but they do not regard compatible donors as indifferent with no transplant. Of course, $ \a_i(\sim_i,\o_i)=\emptyset $ holds when agents' preferences are strict. Below we formalize it as an assumption on agents' preference domains. 

\begin{assumption}[No indifference with outside option (NI)]
	For every $ i $ and $ \succsim_i\in \mathcal{R}_i $, $ \a_i(\sim_i,\o_i)=\emptyset $.
\end{assumption}

Our results rely on agents' freedom of varying the ranking of outside options in preferences. This is formalized as a richness assumption on their preference domains. Recall that for every $ \succsim_i $, $ \a_i(\gg_i,\o_i) $ is the ``upper part'' of $ \a_i(\succsim_i,\o_i) $.  

\begin{assumption}[Richness]
	For every $ i $, $ \succsim_i\in \mathcal{R}_i $, and $ a\in \mathcal{A}_i(\gg_i,\o_i)$, there exists $ \succsim'_i\in \mathcal{R}_i $ such that, for all $ b\in \a_i $, $ b\succsim_i a \implies b\succsim'_i a$, and $ a\succ_i b \implies a\succ'_i \o_i \succ'_i b $.
\end{assumption}

Put differently, Richness requires that for every $ \succsim_i\in \mathcal{R}_i $ and $ a\in \mathcal{A}_i(\gg_i,\o_i)$, there exist $ \succsim'_i\in \r_i $ such that 
\[
 \a_i(\succsim_i ,a)=\a_i(\succsim'_i,a)=\a_i(\succ'_i,\o_i) \text{ and }\a_i(\sim'_i,\o_i)=\emptyset.
\] In words, $ \o_i $ is ranked immediately below $ a $ in $ \succsim'_i $, the upper contour set of $ a $ is invariant from $ \succsim_i $ to $ \succsim'_i $, and no $ a\in \a_i $ is indifferent with $ \o_i $ in $ \succsim'_i $ (so NI is assumed for $ \succsim'_i $). Because there are fewer acceptable allocations in $ \succsim'_i $ than in $ \succsim_i $, we call $ \succsim'_i $ a \textit{contraction of $ \succsim_i $ at $ a $}.

\begin{definition}(Contraction)
	For every $ i $, $ \succsim_i\in \mathcal{R}_i $ and $ a\in \mathcal{A}_i(\gg_i,\o_i)$, every $ \succsim'_i $ in Assumption 2 is called a \textbf{contraction of $ \succsim_i $ at $ a $}. If a contraction $ \succsim'_i $
	ranks the allocations in $ \a_i $ in the same way as $ \succsim_i $ does, we call it the \textbf{truncation of $ \succsim_i $ at $ a $}, and denote it by $ \succsim^a_i $.
\end{definition}

  We say $ \r_i $ accommodates truncation if it contains the truncation of every $ \succsim_i\in \mathcal{R}_i $ at every $ a\in \mathcal{A}_i(\gg_i,\o_i)$.
  
 \begin{remark}\label{rmk:object:prefdomain}
 	In the object assignment  model, Richness and NI are satisfied if every $ \r_i $ consists of all strict preference relations, or if every $ \r_i $ consists of all preference relations in which there is no indifference between real objects and outside options. We will use these two preference domains in Section \ref{section:application}.
 \end{remark}

Finally, we define a notion called \textit{upper-equivalence}, which coarsens welfare-equivalence. Recall that two allocations $ p $ and $ p' $ are welfare-equivalent if $ p'[\a(\succsim_i,a)]=  p[\a(\succsim_i,a)] $ for every $ i $ and $ a\in \a$. Upper-equivalence only requires that $ p'[\a(\succsim_i,a)]=  p[\a(\succsim_i,a)] $ for every $ i $ and $ a\in \mathcal{A}_i(\gg_i,\o_i)$.

\begin{definition}[Upper-equivalence]
	Two allocations $ p $ and $ p' $ are \textbf{upper-equivalent} if for every $ i $ and $ a\in \mathcal{A}_i(\gg_i,\o_i)$, $ p'[\a_i(\succsim_i,a)]=  p[\a_i(\succsim_i,a)] $.
\end{definition}

\begin{remark}\label{rmk:object:upper-equivalence}
	In the object assignment model, for any preference relation $ \succsim_i $, we define $ O(\gg_i,\emptyset)\coloneqq\{o\in O: \exists o'\in O \text{ such that }o\succ_i o'\succsim_i \emptyset\} $. If agents have strict preferences, then two allocations $ p $ and $ p' $ are upper-equivalent if for every $ i $ and $ o\in  O(\gg_i,\emptyset)$, $ p_{i,o}=p'_{i,o} $.
\end{remark}

\subsection{Mechanism}\label{section:model:mechanism} 

Fixing $ I $, $ \a$ and $ \r $ in our model, a  \textit{mechanism}  is a function $ \psi:\mathcal{R}\rightarrow \Delta(\mathcal{A}) $ that finds an allocation for every $ \succsim_I\in \mathcal{R} $. If $ \psi $ always finds a deterministic allocation, then it is a \textit{deterministic mechanism}. Otherwise, it is a \textit{random mechanism}. We say $ \psi $ is \textit{IR} if, for all $ \succsim_I\in \mathcal{R} $, $ \psi(\succsim_I) $ is IR for $ \succsim_I $.

We extend stochastic dominance and size dominance to mechanisms. For any two mechanisms $ \psi' $ and $ \psi $, $ \psi' $ \textit{weakly stochastically dominates} $ \psi $ if, for all $ \succsim_I\in \mathcal{R} $, $ \psi'(\succsim_I)$ weakly stochastically dominates $\psi(\succsim_I) $. If there further exists some $ \succsim_I $ such that $ \psi'(\succsim_I)$ strictly stochastically dominates $\psi(\succsim_I) $, $ \psi' $ \textit{strictly stochastically dominates} $ \psi $. We say $ \psi' $ and $ \psi $ are \textit{welfare-equivalent} if, for all $ \succsim_I\in \mathcal{R} $, $ \psi'(\succsim_I)$ and $\psi(\succsim_I) $ are welfare-equivalent. Size dominance and size-equivalence are extended similarly.

For every $ \succsim_I\in \mathcal{R} $ and $ i $, let $ \succsim_{-i} $ denote the preference profile of the agents other than $ i $.

\begin{definition}
	A mechanism $ \psi $ is \textbf{strategy-proof} if, for every $ \succsim_I\in \mathcal{R} $, $ i\in I $ and $ \succsim'_i\in \mathcal{R}_i\backslash \{\succsim_i\} $, $ \psi(\succsim_I)\succsim^{sd}_i\psi(\succsim'_i,\succsim_{-i}) $.
\end{definition}

Let $ \psi$ be a strategy-proof mechanism. Given a preference profile $ \succsim_I $, suppose some $ i $ contracts his preferences $ \succsim_i $ at some $ a\in \a_i(\gg_i,\o_i) $, and denote the contraction by $ \succsim'_i $. Then strategy-proofness requires that $ \psi(\succsim_I)\succsim^{sd}_i\psi(\succsim'_i,\succsim_{-i}) $ and $ \psi(\succsim'_i,\succsim_{-i})\succsim'^{sd}_i \psi(\succsim_I) $. Because $ \a_i(\succsim_i ,a)=\a_i(\succsim'_i,a) $, the two stochastic dominance conditions require that $ \psi(\succsim_I)[\a_i(\succsim_i,a)]=\psi(\succsim'_i,\succsim_{-i})[\a_i(\succsim_i,a)] $. In words, the probability mass of the upper contour set of $ a $ is invariant in the allocations found by $ \psi $. This is what we call \textit{contraction-invariance}.

\begin{definition}\label{defn:contraction-invariance}
	A mechanism $ \psi $ is \textbf{contraction-invariant} if, for every $ \succsim_I\in \mathcal{R} $, $ i\in I $ and $ a\in \mathcal{A}_i(\gg_i,\o_i)$, if $ \succsim'_i $ is a contraction of $ \succsim_i $ at $ a $, then $ \psi(\succsim_I)[\a_i(\succsim_i,a)]=\psi(\succsim'_i,\succsim_{-i})[\a_i(\succsim_i,a)] $.
	
	If every $ \r_i $ accommodates truncation, a mechanism $ \psi $ is \textbf{truncation-invariant} if, for every $ \succsim_I \in \mathcal{R}$, $ i\in I $ and $ a\in \mathcal{A}_i(\gg_i,\o_i)  $, $ \psi(\succsim_I)[\a_i(\succsim_i,a)]=\psi(\succsim^a_i,\succsim_{-i})[\a_i(\succsim_i,a)] $.
\end{definition}

The following observation will be used in Theorem \ref{thm1}. Let $ \psi $ be an IR and strategy-proof mechanism. For every $ \succsim_I $ and $ i $, if $ \succsim'_i $ is a contraction of $ \succsim_i $ at $ a\in \mathcal{A}_i(\gg_i,\o_i) $, then $ \a_i(\succsim_i,a)=\a_i(\succsim'_i,\o_i) $. Because $ \psi $ is IR, $ \psi(\succsim'_i,\succsim_{-i})[\a_i]=\psi(\succsim'_i,\succsim_{-i})[\a_i(\succsim'_i,\o_i)] $. So,
\begin{align*}
\psi(\succsim'_i,\succsim_{-i})[\a_i]=\psi(\succsim'_i,\succsim_{-i})[\a_i(\succsim'_i,\o_i)]=\psi(\succsim'_i,\succsim_{-i})[\a_i(\succsim_i,a)]=\psi(\succsim_I)[\a_i(\succsim_i,a)].
\end{align*}
That is, $ i $'s participation size in $ \psi(\succsim'_i,\succsim_{-i}) $ is equal to $ \psi(\succsim_I)[\a_i(\succsim_i,a)]$. It means that, by letting $ i $ contract his preferences at different $ a\in \mathcal{A}_i(\gg_i,\o_i) $, we can recover his welfare in $ \psi(\succsim_I) $ from his participation size in different contracted preference profiles.

A deterministic mechanism $ \psi $ is contraction-invariant if for every $ \succsim_I $, $ i$ and $ a\in \mathcal{A}_i(\gg_i,\o_i) $, if $ \succsim'_i $ is a contraction  of $ \succsim_i $ at $ a $, then $ \psi(\succsim'_i,\succsim_{-i})\succsim'_i a $ if and only if $ \psi(\succsim_I)\succsim_i a $.

\begin{remark}\label{rmk:object:truncation-invariance}
	Truncation-invariance is equivalent to a seemingly stronger requirement: for every $ \succsim_I \in \mathcal{R}$, $ i\in I $ and $ a\in \mathcal{A}_i(\gg_i,\o_i)  $, $ \psi(\succsim_I)[\a_i(\sim_i,b)]=\psi(\succsim^a_i,\succsim_{-i})[\a_i(\sim_i,b)] $ for all $ b\in \a_i(\succsim_i,a) $. To see this, note that for every $ b\in \a_i(\succ_i,a) $, $ \succsim^b_i $ is the truncation of both $ \succsim_i $ and $ \succsim^a_i $ at $ b $. By truncation-invariance, $ \psi(\succsim_I)[\a_i(\succsim_i,b)]=\psi(\succsim^b_i,\succsim_{-i})[\a_i(\succsim_i,b)]=\psi(\succsim^a_i,\succsim_{-i})[\a_i(\succsim_i,b)] $.
	In the object assignment model, when agents have strict preferences, \cite{hashimoto2014two} call this stronger requirement \textbf{weak truncation robustness}.\label{footnote:truncation-invaraince}
\end{remark}

\section{Main results} \label{section:theorems}

Our first result presents the relation between stochastic dominance and size dominance for IR and strategy-proof mechanisms.

\begin{theorem}\label{thm1}
	Let $ \psi $ and $ \psi' $ be two IR and strategy-proof mechanisms.
	\begin{enumerate}
		\item Under NI, $ \psi' $ weakly stochastically dominates $ \psi $ $ \implies $ $ \psi' $ weakly size dominates $ \psi $.
		
		\item Under Richness, $ \psi' $ weakly size dominates $ \psi $ $ \implies $ $ \psi' $ weakly stochastically dominates $ \psi $.
	\end{enumerate} 
\end{theorem}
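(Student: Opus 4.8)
The two implications are essentially dual, and each is driven by the IR plus (contraction- or truncation-)invariance structure that the excerpt has already extracted from strategy-proofness. For part 1, the plan is to fix a preference profile $\succsim_I$ and an agent $i$, and translate the hypothesis $\psi'(\succsim_I)\succsim^{sd}_i\psi(\succsim_I)$ into a statement about participation sizes. Under NI we have $\a_i(\sim_i,\o_i)=\emptyset$, so $\a_i=\a_i(\succ_i,\o_i)$, i.e.\ $\a_i$ is exactly the strict upper contour set of $\o_i$. If $\o_i$ is not the bottom of $\succsim_i$, pick the allocation $a$ sitting immediately above $\o_i$; then $\a_i(\succsim_i,a)=\a_i$, and the defining inequality of stochastic dominance at this $a$ gives directly $\psi'(\succsim_I)[\a_i]\ge\psi(\succsim_I)[\a_i]$. (If $\o_i$ is at the bottom, then $\a_i=\a$ and both sides are $1$; if $\a_i=\emptyset$ both sides are $0$.) Doing this for every $i$ yields weak size dominance. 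This direction is short and uses only NI and the definition of $\succsim^{sd}_i$, not invariance.

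For part 2, the plan is the reverse: recover welfare from participation sizes using contraction-invariance. Fix $\succsim_I$, an agent $i$, and an allocation $a$; I must show $\psi'(\succsim_I)[\a(\succsim_i,a)]\ge\psi(\succsim_I)[\a(\succsim_i,a)]$ for all $a$, which by the definition of stochastic dominance is what weak stochastic dominance for $i$ means. For $a$ with $\a(\succsim_i,a)\supseteq\a_i$ (i.e.\ $a$ at or below $\o_i$, equivalently unacceptable or the worst acceptable class), IR forces $\psi'(\succsim_I)[\a(\succsim_i,a)]=\psi(\succsim_I)[\a(\succsim_i,a)]=1$, so the inequality is trivial. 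The interesting case is $a\in\a_i(\gg_i,\o_i)$. Here I invoke Richness: there is a contraction $\succsim'_i$ of $\succsim_i$ at $a$, with $\a_i(\succsim_i,a)=\a_i(\succsim'_i,a)=\a_i(\succsim'_i,\o_i)$ and NI holding at $\succsim'_i$. Now apply contraction-invariance (which both $\psi$ and $\psi'$ satisfy, being strategy-proof) to get
\[
\psi(\succsim_I)[\a_i(\succsim_i,a)]=\psi(\succsim'_i,\succsim_{-i})[\a_i(\succsim_i,a)]=\psi(\succsim'_i,\succsim_{-i})[\a_i],
\]
the last equality because IR plus $\a_i(\sim'_i,\o_i)=\emptyset$ collapses the acceptable set to $\a_i$ in the contracted profile — exactly the displayed computation in the excerpt. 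The same holds for $\psi'$. So the claim reduces to $\psi'(\succsim'_i,\succsim_{-i})[\a_i]\ge\psi(\succsim'_i,\succsim_{-i})[\a_i]$, which is precisely the size-dominance hypothesis applied at the profile $(\succsim'_i,\succsim_{-i})$. One subtlety to handle cleanly: since $\a(\succsim_i,a)$ may contain allocations outside $\a_i$? No — for $a\in\a_i(\gg_i,\o_i)$ we have $a\succ_i\o_i$, so every $b\succsim_i a$ satisfies $b\succ_i\o_i$, hence $b\in\a_i$; thus $\a(\succsim_i,a)=\a_i(\succsim_i,a)$ and the switch between $\a$ and $\a_i$ is harmless. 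Collecting all $a$ and all $i$ gives weak stochastic dominance.

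The main obstacle is organizational rather than deep: one must carefully enumerate the cases for the location of $a$ relative to $\o_i$ in $\succsim_i$ (unacceptable; worst acceptable indifference class, i.e.\ $\a_i(\succsim_i,\o_i)\setminus\a_i(\gg_i,\o_i)$; and $\a_i(\gg_i,\o_i)$) and check that in each case either IR makes the inequality trivial or Richness supplies a contraction to which invariance and the size hypothesis apply. The other point requiring care is that contraction-invariance is stated for the probability mass $\psi(\succsim_I)[\a_i(\succsim_i,a)]$, so one should verify the identity $\a_i(\succsim_i,a)=\a_i(\succsim'_i,\o_i)$ from Richness is used with the correct set on each side of each equality; this is exactly the bookkeeping already done in the paragraph preceding the theorem statement, so I would cite that computation rather than redo it. No step needs the finiteness of $\a$ beyond what is implicit in working with finite upper contour sets, consistent with the remark that Theorem 1 survives for infinite $\a$.
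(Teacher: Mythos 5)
Your overall strategy is the paper's: part 1 is the observation that IR plus NI makes $i$'s participation size equal to $\psi(\succsim_I)[\a_i(\succ_i,\o_i)]$, which stochastic dominance controls; part 2 reduces the inequality $\psi'(\succsim_I)[\a(\succsim_i,a)]\ge\psi(\succsim_I)[\a(\succsim_i,a)]$ for $a\in\a_i(\gg_i,\o_i)$ to the size-dominance hypothesis at a contracted profile via Richness and contraction-invariance, exactly as in the paper (the paper phrases part 2 as a contradiction argument, but that is cosmetic). Two small set-theoretic slips are harmless because IR rescues them probabilistically: $\a_i$ is not literally $\a_i(\succ_i,\o_i)$ (it may contain unacceptable allocations), but IR puts zero mass on those, which is all you use.

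One step, however, is wrong as written: the claim that for $a$ in the worst acceptable indifference class, i.e.\ $a\in\a_i(\succ_i,\o_i)\setminus\a_i(\gg_i,\o_i)$, ``IR forces $\psi'(\succsim_I)[\a(\succsim_i,a)]=\psi(\succsim_I)[\a(\succsim_i,a)]=1$.'' For such $a$ we have $\a(\succsim_i,a)=\a_i(\succ_i,\o_i)$, which excludes the outside-option allocation, so IR does not force this probability to be $1$ (it equals $i$'s participation size, which can be anything in $[0,1]$); the containment $\a(\succsim_i,a)\supseteq\a_i$ you invoke also fails here. The case is nonetheless immediate from a hypothesis you already have: under IR, $\psi'(\succsim_I)[\a(\succsim_i,a)]=\psi'(\succsim_I)[\a_i]\ge\psi(\succsim_I)[\a_i]=\psi(\succsim_I)[\a(\succsim_i,a)]$ by weak size dominance at $\succsim_I$ itself --- which is precisely how the paper closes this case (it is the ``otherwise'' clause showing $a$ must lie in $\a_i(\gg_i,\o_i)$). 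With that repair the proof is complete and coincides with the paper's.
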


\begin{proof}
	
	(1) Fix a preference profile $ \succsim_I $. Suppose $ p' $ and $ p $ are two IR allocations and $ p' $ weakly stochastically dominates $ p $. Then for every $ i $, $ p[\a_i]=p[\a_i(\succ_i,\o_i)]+p[\a_i(\sim_i,\o_i)] $ and $ p'[\a_i]=p'[\a_i(\succ_i,\o_i)]+p'[\a_i(\sim_i,\o_i)] $. NI requires $ \a_i(\sim_i,\o_i)=\emptyset $ and weak stochastic dominance requires $ p'[\a_i(\succ_i,\o_i)]\ge p[\a_i(\succ_i,\o_i)] $. So $ p'[\a_i]\ge p[\a_i] $, meaning that $ p' $ weakly size dominates $ p $.

	(2) Suppose $ \psi' $ weakly size dominates $ \psi$, but $ \psi' $ does not weakly stochastically dominate $ \psi $. It means that there exists $ \succsim_I $ and $ i $ such that $ \psi'(\succsim_I) \not\succsim_i^{sd} \psi(\succsim_I)$. Then there must exist $ a\in \a $ such that $ \psi(\succsim_I)[\a(\succsim_i,a)]>\psi'(\succsim_I)[\a(\succsim_i,a)] $.
	Because $ \psi $ and $ \psi' $ are IR, it must be that $ a\succ_i\o_i $ and thus $ \a(\succsim_i,a)=\a_i(\succsim_i,a) $. Because $ \psi' $ weakly size dominates $ \psi$, it must be that $ a\in \a_i(\gg_i,\o_i) $; otherwise $ \psi'(\succsim_I)[\a_i(\succsim_i,a)]=\psi'(\succsim_I)[\a_i]\ge \psi(\succsim_I)[\a_i]= \psi(\succsim_I)[\a_i(\succsim_i,a)] $.

	Under Richness, there exists a contraction $ \succsim'_i $ of $ \succsim_i $ at $ a $. Because $ \psi $ and $ \psi' $ are strategy-proof, they are contraction-invariant. So $ \psi(\succsim'_i,\succsim_{-i})[\a_i]=\psi(\succsim_I)[\a(\succsim_i,a)] $ and $ \psi'(\succsim'_i,\succsim_{-i})[\a_i]=\psi'(\succsim_I)[\a(\succsim_i,a)] $.   This means that $ \psi(\succsim'_i,\succsim_{-i})[\a_i]>\psi'(\succsim'_i,\succsim_{-i})[\a_i] $. But it contradicts the assumption that $ \psi' $ weakly size dominates $ \psi $.
\end{proof}

As explained in Section \ref{section:random:allocation}, under NI, weak stochastic dominance implies weak size dominance for IR allocations. So strategy-proofness plays a role only in the second part of Theorem \ref{thm1}. The insight is that, as explained in Section \ref{section:model:mechanism},  for any IR and strategy-proof mechanism $ \psi $, we can recover every $ i $'s welfare in every $ \psi(\succsim_I) $ from $ i $'s participation size in contracted preference profiles. To see this, note that, first, for every $ a\in \a(\sim_i,\o_i) $, IR implies that $ \psi(\succsim_I)[\a(\succsim_i,a)]=1 $. Second, for every $ a\in \a_i(\gg_i,\o_i) $, by contraction-invariance, $ \psi(\succsim_I)[\a(\succsim_i,a)] $ is equal to $ i $'s participation size in $ \psi(\succsim'_i,\succsim_{-i}) $ where $ \succsim'_i $ is a contraction of $ \succsim_i $ at $ a $. Last, if $ \a_i(\succ_i,\o_i)\neq \a_i(\gg_i,\o_i) $, meaning $ \a_i(\sim_i,\o_i)=\emptyset $, for every $ a\in \a_i(\succ_i,\o_i)\backslash \a_i(\gg_i,\o_i)$, $ \psi(\succsim_I)[\a(\succsim_i,a)]$ is equal to $ i $'s participation size in $\psi(\succsim_I) $. So for IR and strategy-proof mechanisms, size dominance implies stochastic dominance.
    
An immediate corollary to Theorem \ref{thm1} is that, when NI and Richness both hold, stochastic dominance is equivalent to size dominance for IR and strategy-proof mechanisms.

\begin{corollary}\label{corollary:thm1:1}
	Let $ \psi $ and $ \psi' $ be two IR and strategy-proof mechanisms. Under NI and Richness:
	\begin{enumerate}
		\item $ \psi' $ strictly stochastically dominates $ \psi $ $ \Leftrightarrow $ $ \psi' $ strictly size dominates $ \psi $.
		
		\item $ \psi' $ is welfare-equivalent to $ \psi $ $ \Leftrightarrow $ $ \psi' $ is size-equivalent to $ \psi $.\label{prop2.equivalence}
	\end{enumerate}
\end{corollary}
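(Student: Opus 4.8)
The plan is to obtain both equivalences directly from Theorem \ref{thm1}, applying its two implications symmetrically. The key observation is that all four notions appearing in the statement --- strict stochastic dominance, strict size dominance, welfare-equivalence, and size-equivalence --- are assembled from the corresponding \emph{weak} notions, so the two arrows supplied by Theorem \ref{thm1} (weak stochastic dominance $\Rightarrow$ weak size dominance under NI; weak size dominance $\Rightarrow$ weak stochastic dominance under Richness) are all that is needed.

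I would prove part \ref{prop2.equivalence} first, since part (1) reduces to it. Observe that $\psi'$ is welfare-equivalent to $\psi$ if and only if $\psi'$ weakly stochastically dominates $\psi$ and $\psi$ weakly stochastically dominates $\psi'$; likewise $\psi'$ is size-equivalent to $\psi$ if and only if $\psi'$ weakly size dominates $\psi$ and $\psi$ weakly size dominates $\psi'$. Applying Theorem \ref{thm1}(1) to each of the two weak stochastic-dominance statements yields the two weak size-dominance statements, hence size-equivalence; applying Theorem \ref{thm1}(2) to each of the two weak size-dominance statements yields the two weak stochastic-dominance statements, hence welfare-equivalence. This establishes part \ref{prop2.equivalence}.

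For part (1), I would first record the elementary reformulation that, for mechanisms, $\psi'$ strictly stochastically dominates $\psi$ if and only if $\psi'$ weakly stochastically dominates $\psi$ but $\psi'$ is not welfare-equivalent to $\psi$: once weak stochastic dominance holds at every profile, the existence of a profile carrying a strict comparison for some agent is exactly the negation of welfare-equivalence (an inequality in the definition of $\succsim_i^{sd}$ fails to be an equality). The analogous statement holds verbatim for size dominance. Granting this, suppose $\psi'$ strictly stochastically dominates $\psi$. Then $\psi'$ weakly stochastically dominates $\psi$, so by Theorem \ref{thm1}(1) $\psi'$ weakly size dominates $\psi$; and if $\psi'$ were size-equivalent to $\psi$, part \ref{prop2.equivalence} would force welfare-equivalence, contradicting strict stochastic dominance --- hence $\psi'$ strictly size dominates $\psi$. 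The converse is symmetric, using Theorem \ref{thm1}(2) in place of (1) and part \ref{prop2.equivalence} again.

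Since the proof is a short chain of implications resting entirely on Theorem \ref{thm1}, there is no substantial obstacle. The only point deserving explicit care --- and the one I would spell out --- is the reformulation of ``strictly dominates'' for mechanisms as ``weakly dominates but not equivalent,'' which relies on weak dominance being assumed at every profile before the strict comparison is invoked; without that, the strict comparison at one profile would not by itself be transferable between the stochastic and size orders.
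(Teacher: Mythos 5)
Your proof is correct, and it is essentially the argument the paper intends: the paper states this corollary as an ``immediate'' consequence of Theorem \ref{thm1} without writing out a proof, and your derivation --- decomposing welfare-/size-equivalence as mutual weak dominance, then handling the strict versions via ``weak dominance plus non-equivalence'' --- is exactly the short chain of implications that makes the claim immediate. The one point you rightly flag, that the strict comparison at a single profile only transfers between the two orders because weak dominance is already in force at every profile, is the only detail worth spelling out, and you have it right.
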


\begin{remark}\label{rmk:thm1:infinite}
	When $ \a $ is infinite, Theorem \ref{thm1} holds as long as the measurability issue is dealt with. Specifically,  for every $ \succsim_i\in \r_i $ and $ a\in \a $, we assume that all of $ \a(\succsim_i,a) $, $ \a(\succ_i,a) $, $ \a(\sim_i,a) $ and the similar subsets defined for $ \a_i $ are measurable. A random allocation is a probability measure so that a random allocation $ p' $ weakly stochastically dominates another random allocation $ p $ for $ \succsim_i $ if $ p'[\a(\succsim_i, a)] \ge p[\a(\succsim_i, a)] $ for all $ a\in \a $. The proof of Theorem \ref{thm1} remains the same as before.\label{footnote:infinite:a}
\end{remark}

Our second result presents another implication of strategy-proofness. It proves that, under Richness, for any two strategy-proof mechanisms, if upper-equivalence implies welfare-equivalence for their outcomes in every preference profile, then the two mechanisms are welfare-equivalent. As we will show in Section \ref{section:application}, this lemma is at the heart of several results of economic relevance in  distinct models.

\begin{lemma}\label{thm2}
	Let $ \psi $ and $ \psi' $ be two strategy-proof mechanisms. Under Richness, if for every $ \succsim_I\in \mathcal{R} $, $ \psi(\succsim_I)$ and $ \psi'(\succsim_I) $ are upper-equivalent$ \implies $$ \psi(\succsim_I)$ and $ \psi'(\succsim_I) $ are welfare-equivalent, then $ \psi $ and $ \psi' $ are welfare-equivalent.
\end{lemma}

\begin{proof}
	Suppose towards a contradiction that $ \psi $ and $ \psi' $ are not welfare-equivalent. Define $ \r^*\coloneqq \{\succsim_I\in \r: \psi(\succsim_I) \text{ and }  \psi'(\succsim_I)  \text{ are not welfare-equivalent}\} $. Then for every $ \succsim_I\in \r^* $, $ \psi(\succsim_I)$ and $ \psi'(\succsim_I) $ are not upper-equivalent. It means that there exist $ i $ and $ a\in \a_i(\gg_i,\o_i) $ such that 
   $\psi(\succsim_I)[\a_i(\succsim_i,a)]\neq \psi'(\succsim_I)[\a_i(\succsim_i,a)]$. Under Richness, let $ \succsim'_i $ be any contraction of $ \succsim_i $ at $ a $. Because $ \psi $ and $ \psi' $ are strategy-proof, $ \psi(\succsim'_i,\succsim_{-i})[\a_i(\succsim'_i,a)]=\psi(\succsim_I)[\a_i(\succsim_i,a)] $ and $ \psi'(\succsim'_i,\succsim_{-i})[\a_i(\succsim'_i,a)]=\psi'(\succsim_I)[\a_i(\succsim_i,a)] $. So $ \psi(\succsim'_i,\succsim_{-i})[\a_i(\succsim'_i,a)]\neq  \psi'(\succsim'_i,\succsim_{-i})[\a_i(\succsim'_i,a)]$, meaning that $ \psi(\succsim'_i,\succsim_{-i}) $ and $ \psi'(\succsim'_i,\succsim_{-i}) $ are not welfare-equivalent. Therefore, $ (\succsim'_i,\succsim_{-i})\in \r^* $. It means that for every $ \succsim_I\in \r^* $, we can find a contracted preference profile $ (\succsim'_i,\succsim_{-i})\in \r^* $. So $ \r^* $ is an infinite set, contradicting the fact that both $ I $ and $ \a $ are finite.
\end{proof}

\begin{remark}\label{rmk:lemma1:formulation}
	An equivalent formulation of Lemma \ref{thm2} is used in Section \ref{section:application}: for any two strategy-proof mechanisms $ \psi $ and $ \psi' $, if for every $ \succsim_I $, $ \psi(\succsim_I)$ and $ \psi'(\succsim_I) $ are not welfare-equivalent$ \implies $$ \psi(\succsim_I)$ and $ \psi'(\succsim_I) $ are not upper-equivalent, then $ \psi $ and $ \psi' $ must be welfare-equivalent.
	
	To prove $ \psi(\succsim_I) $ and $ \psi'(\succsim_I) $ are not upper-equivalent, we need to find $ i\in I $ and $ a\in \a_i(\gg_i,\o_i) $ such that $ \psi(\succsim_I)[\a_i(\succsim_i,a)]\neq \psi'(\succsim_I)[\a_i(\succsim_i,a)] $. If $ \psi $ and $ \psi' $ are deterministic mechanisms, we need to find $ i\in I $ and $ a\in \a_i(\gg_i,\o_i) $ such that $ \psi(\succsim_I)\succsim_i a\succ_i  \psi'(\succsim_I)$ or $ \psi'(\succsim_I)\succsim_i a\succ_i  \psi(\succsim_I)$.
\end{remark}

Theorem \ref{thm1} and Lemma \ref{thm2} both rely on the contraction-invariance property, which is implied by strategy-proofness in the presence of outside options. But they have two differences. 
\textit{First}, Theorem \ref{thm1} does not rely on the finiteness of $ \a $ (see Remark \ref{rmk:thm1:infinite}), but Lemma \ref{thm2} relies on it. With a finite $ \a $, starting with any preference profile $ \succsim_I $, by iteratively contracting agents' preferences, we can find a sequence of preference profiles $ (\succsim^1_I,\succsim^2_I,\ldots,\succsim^n_I) $ such that $ \succsim^1_I=\succsim_I $ and for every $ i $ in $ \succsim^n_I $, $ \a_i(\gg^n_i,\o_i)=\emptyset $. Obviously, the allocations found by any two strategy-proof mechanisms $ \psi $ and $ \psi' $ for every such $ \succsim^n_I $ are upper-equivalent. This is the starting point where we apply the assumption that upper-equivalence  implies welfare-equivalence. By inductively applying this assumption and contraction-invariance, we can conclude that  $ \psi(\succsim_I) $ and $ \psi'(\succsim_I) $ are welfare-equivalent. 
\textit{Second}, Lemma \ref{thm2} holds for strategy-proof mechanisms that are not IR, but Theorem \ref{thm1} requires IR. This is because Theorem \ref{thm1} relates agents' participation sizes to their welfare. As we have explained, an agent's participation size in an allocation does not contain information about his welfare unless IR is assumed.

When agents' preference domains accommodate truncation, our results hold for non-strategy-proof mechanisms that are truncation-invariant. Truncation-invariance can be implied by properties unrelated to incentive. In Section \ref{section:matching}, under weak assumptions the doctor-optimal stable mechanism may not be strategy-proof, but we prove that doctor-optimal stability implies truncation-invariance. Some market design mechanisms are not strategy-proof, but their algorithmic procedures imply truncation-invariance. For example, when agents have strict preferences, the \textit{boston mechanism}  and the \textit{probabilistic serial mechanism} both have the procedure that agents report favorite objects step by step, and the allocation found at each step is finalized and independent of agents' preferences  revealed after the step. This means that both mechanisms are truncation-invariant.

\begin{remark}\label{rmk:truncation-invariant}
	There may be a concern that agents may misreport preferences in non-strategy-proof mechanisms. On the one hand, in some models there are reasons to focus on non-strategy-proof mechanisms, because strategy-proofness conflicts with other goals. For example, in the object assignment model, several papers have demonstrated the conflict between efficiency, fairness, and strategy-proofness \citep{zhou1990conjecture,bogomolnaia2001new,martini2016strategy,nesterov2015fairness}. When efficiency and fairness are primary goals, we have to give up strategy-proofness. On the other hand, we are aware that recent studies suggest that agents may not report true preferences in strategy-proof mechanisms \citep{li2015obviously,ashlagi2015no,rees2018experimental,hassidim2017mechanism}. So whether agents will report true preferences in a specific mechanism is an empirical question. Our results hold for non-strategy-proof mechanisms that are truncation-invariant when incentive is not a concern.
\end{remark}

Finally, we clarify our relation to \cite{alva2017strategy}. AM obtain a theorem in the same model we use, and show many useful applications of their theorem. If we restrict attention to deterministic mechanisms, then stochastic dominance will reduce to Pareto dominance and Theorem \ref{thm1} will reduce to AM's theorem. Theorem \ref{thm1} significantly extends AM's theorem by treating deterministic mechanisms and random mechanisms in a unified way. We point out that our results hold for non-strategy-proof mechanisms that are truncation-invariant. This is useful in some applications (see Section \ref{section:application}). As for preference domain assumptions, our NI is borrowed from AM, but there are subtle differences between our Richness and AM's richness assumption. AM's richness assumes that, for every $ \succsim_i\in \mathcal{R}_i $ and every $ a,b\in \mathcal{A}_i $ with $ b\succ_i a\succsim_i \o_i $, there exists $ \succsim'_i\in \mathcal{R}_i $ such that $ b\succ'_i \o_i \succ'_i a $, and for all $ c\in \mathcal{A}_i $, $c\succsim'_i \o_i \Rightarrow c\succ_i a $. AM's richness is more relaxed than ours. It is enough for their result because in deterministic allocations agents' participation sizes  are degenerate and their preferences over deterministic allocations are complete. Because stochastic dominance is an incomplete relation on random allocations, we have to rely on contraction-invariance to connect the allocations in different preference profiles. This explains the formulation of our Richness.  
There is no counterpart of Lemma \ref{thm2} in AM's paper. So we can unify results that AM's theorem cannot cover. This is also why we use most applications in Section \ref{section:application} to show the usefulness of Lemma \ref{thm2}.

In Section \ref{section:efficiency:frontier} we present useful corollaries to our results.

\subsection{Efficiency frontier of IR and strategy-proof mechanisms}\label{section:efficiency:frontier}

In any class of mechanisms, a member is said to be on the \textit{efficiency frontier} if it is not strictly stochastically dominated by any other member. 
In the object assignment model, \cite{bogomolnaia2001new} call a mechanism \textit{ordinally efficient} if it is not strictly stochastically dominated by any other mechanism. Clearly, in any class of mechanisms its ordinally efficient members must be on the efficiency frontier. But \cite{erdil2014strategy} proves that in the class of IR and strategy-proof mechanisms,  non-wastefulness, which is weaker than ordinal efficiency, is sufficient for a member to be on the efficiency frontier. We can similarly define ordinal efficiency in our abstract model. But below we present some properties weaker than ordinal efficiency that are sufficient for a mechanism to be on the efficiency frontier of IR and strategy-proof mechanisms. In the object assignment model, these properties are weaker than  non-wastefulness (see Section \ref{section:object}).

\begin{definition}[Bidominance]\label{defn:bidominance}
	For any two allocations $ p $ and $ p' $, $ p' $ \textbf{bidominates} $ p $ if $ p' $ strictly stochastically dominates $ p $ and also strictly size dominates $ p $; $ p' $ \textbf{strongly bidominates} $ p $ if $ p' $ bidominates $ p $ and for every $ i $, $ p'\succ^{sd}_i p \Rightarrow p'[\a_i]> p[\a_i] $. 
	
	An allocation $ p $ is \textbf{unbidominated} if it is not bidominated by any other allocation, and is \textbf{not-strongly-bidominated} if it is not strongly bidominated by any other allocation. We call a mechanism unbidominated/not-strongly-bidominated if it always finds such an allocation for every preference profile.
\end{definition}

Example \ref{example:bidominance} illustrates the differences between these concepts. 

\begin{example}[Illustration of bidominance]\label{example:bidominance}
	Suppose $ I=\{i,j\} $ and $\a= \{a,b,c,o^*\} $. The participants in each deterministic allocation are: $ I(a)=I(b)=\{i,j\} $, $ I(c)=\{i\} $, and $ I(o^*)=\emptyset $. So $ i $ participates in $ a,b,c $, and $ j $ participates in $ a,b $. Agents have the following preferences: $ b\succ_i c\succ_i a\succ_i o^* $ and $ b\succ_j a\succ_j c\sim_j o^* $. Consider the following five allocations:
	\begin{table}[!htb]
		\centering
		\begin{tabular}{|c|c|c|c|c|}
			\hline
			$ p^1 $ & $ p^2 $ & $ p^3 $ & $ p^4 $ & $ p^5 $ \\ \hline
			$ 1/2a,1/2o^* $ & $ 1/2b,1/2o^* $ & $ 1/2b,1/2c $ & $ 1b $ & $ 1a $ \\ \hline
		\end{tabular}
	\end{table}

   All of $ p^2,p^3,p^4,p^5 $ strictly stochastically dominate $ p^1 $. $ p^2 $ does not bidominate $ p^1 $ because no agent's participation size is increased in $ p^2 $. $ p^3 $ bidominates $ p^1 $ because $ i $'s participation size is increased in $ p^3 $. But $ p^3 $ does not strongly bidominate $ p^1 $ because $ j $ is better off in $ p^3 $ yet his participation size is not increased in $ p^3 $. $ p^4 $ and $ p^5 $ strongly bidominate $ p^1 $ because every agent's participation size is increased. 
   
   Because every agent most prefers $ b $, $ p^4 $ is the only ordinally efficient allocation. $ p^5 $ is strictly stochastically dominated by $ p^4 $. But because every agent's participation size in $ p^5 $ is one, $ p^5 $ is unbidominated. $ p^3 $ is bidominated by $ p^4 $, because both agents become strictly better off and $ j $'s participation size is increased in $ p^4 $. But $ p^3 $ is not-strongly-bidominated, because the only way to strictly stochastically dominate $ p^3 $ is to move probabilities from $ c $ to $ b $ and this makes $ i $ strictly better off but does not increase $ i $'s participation size.
\end{example}

We first present a corollary to Theorem \ref{thm1}. If an allocation $ p $ is unbidominated, then any allocation $ p' $ that strictly stochastically dominates $ p $ has to be size-equivalent to $ p $.
So if two strategy-proof mechanisms $ \psi' $ and $ \psi'' $ strictly stochastically dominate some IR and unbidominated mechanism $ \psi $, then $ \psi' $ and $ \psi'' $ are respectively size-equivalent to $ \psi $ and therefore are size-equivalent to each other. By Theorem \ref{thm1}, under Richness,  $ \psi' $ and $ \psi'' $ are welfare-equivalent. It means that in terms of welfare, at most one strategy-proof mechanism can strictly stochastically dominate an IR and unbidominated mechanism. Moreover, if $ \psi $ is strategy-proof, then $ \psi' $ and $ \psi'' $ have to be welfare-equivalent to $ \psi $, which means that they cannot strictly stochastically dominate $ \psi $. So in the class of IR and strategy-proof mechanisms, unbidominated mechanisms are on the efficiency frontier.

\begin{corollary}\label{corollary:thm1:2}
	Let $ \psi $ be an IR and unbidominated mechanism. Under Richness:
	\begin{enumerate}
		\item If $ \psi' $ and $ \psi'' $ are two strategy-proof mechanisms that strictly stochastically dominate $ \psi $, then $ \psi' $ and $ \psi'' $ are welfare-equivalent.
		
		\item If $ \psi $ is strategy-proof, then $ \psi $ is not strictly stochastically dominated by any other  strategy-proof mechanism.
	\end{enumerate}
\end{corollary}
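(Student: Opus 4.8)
The plan is to reduce everything to one fact about allocations together with two applications of Theorem~\ref{thm1}(2). First I would record that stochastic dominance preserves individual rationality: if $p$ is IR and $p'$ weakly stochastically dominates $p$, then for every $i$ we have $1=p[\a(\succsim_i,\o_i)]\le p'[\a(\succsim_i,\o_i)]\le 1$, so $p'$ is IR. Hence in part~1 both $\psi'$ and $\psi''$, and in part~2 any strategy-proof mechanism dominating $\psi$, are automatically IR, which is what lets us invoke Theorem~\ref{thm1}.

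The key lemma I would prove is: \emph{if $p$ is IR and unbidominated and $p'$ strictly stochastically dominates $p$, then $p'$ and $p$ are size-equivalent.} By the previous paragraph $p'$ is IR, so the computation in the proof of Theorem~\ref{thm1}(1), applied to the single pair $(p,p')$, shows that $p'$ weakly size dominates $p$. Were the size inequality strict for some agent, $p'$ would strictly size dominate $p$; combined with strict stochastic dominance this would say $p'$ bidominates $p$, contradicting that $p$ is unbidominated. So $p'[\a_i]=p[\a_i]$ for every $i$. Applying this profile by profile: any mechanism that strictly stochastically dominates an IR unbidominated mechanism is size-equivalent to it.

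For part~1, the key lemma gives that $\psi'$ and $\psi''$ are each size-equivalent to $\psi$, hence size-equivalent to each other. Both are IR and strategy-proof, so under Richness Theorem~\ref{thm1}(2), used in each direction, yields that $\psi'$ weakly stochastically dominates $\psi''$ and $\psi''$ weakly stochastically dominates $\psi'$; together this is exactly welfare-equivalence. For part~2, if $\psi$ is also strategy-proof and some strategy-proof $\psi'$ strictly stochastically dominates $\psi$, the key lemma makes $\psi'$ size-equivalent to $\psi$, and Theorem~\ref{thm1}(2) in both directions then makes $\psi'$ welfare-equivalent to $\psi$ --- contradicting that the dominance is strict at some profile for some agent. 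Hence $\psi$ admits no strategy-proof strict stochastic-dominance improvement.

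The main obstacle is the key lemma, specifically the step that turns ``$p$ is not bidominated'' into ``every stochastic-dominance improvement of $p$ is size-equivalent to $p$.'' The point to get right is that for IR allocations a weak stochastic-dominance improvement is automatically a weak size improvement (the content of Theorem~\ref{thm1}(1)), so the only way such an improvement can fail to be a \emph{strict} size improvement --- which unbidominance forbids --- is for it to leave every participation size unchanged. Everything downstream is just transitivity of size-equivalence and the observation that two-sided weak stochastic dominance is welfare-equivalence.
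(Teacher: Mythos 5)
Your overall route is the same as the paper's: isolate a key lemma saying that an IR, unbidominated allocation admits no strict stochastic-dominance improvement other than size-equivalent ones, conclude that $\psi'$ and $\psi''$ are size-equivalent to $\psi$ and hence to each other, and then apply Theorem~\ref{thm1}(2) in both directions to turn two-sided weak size dominance into welfare-equivalence. Your preliminary observation that weak stochastic dominance over an IR allocation preserves IR is correct and genuinely needed (the paper leaves it implicit), and everything downstream of the key lemma is fine.

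The one genuine issue sits exactly at the step you flag as the main obstacle. You derive ``$p'$ weakly size dominates $p$'' from ``the computation in the proof of Theorem~\ref{thm1}(1),'' but that computation uses NI (it needs $\a_i(\sim_i,\o_i)=\emptyset$), whereas the corollary assumes only Richness. Without NI the implication fails, and so does your key lemma: take a single agent $i$ and $\a=\{a,b,o^*\}$ with $I(a)=I(b)=\{i\}$, $I(o^*)=\emptyset$, and $b\succ_i a\sim_i \o_i$. The deterministic allocation $a$ is IR and unbidominated (nothing can strictly size dominate it, since $i$'s participation size is already $1$), yet $\tfrac12 b+\tfrac12 o^*$ strictly stochastically dominates it while giving $i$ participation size only $\tfrac12$. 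So the passage from unbidominatedness to size-equivalence of every strict stochastic improvement is not justified under Richness alone and is false at the level of allocations. To be fair, the paper's own justification is a single sentence asserting precisely the same claim and is exposed to the same objection; under NI together with Richness your argument is complete and coincides with the paper's. But as a proof of the corollary under its stated hypothesis, this step is a gap, and you should either add NI to the hypotheses or supply a different argument that exploits strategy-proofness of $\psi'$ and $\psi''$ rather than only the allocation-level lemma.
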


We then present a corollary to Lemma \ref{thm2}. If an IR and not-strongly-bidominated allocation $ p $ is strictly stochastically dominated by an allocation $ p' $, then there exists $ i $ such that $ p'\succ^{sd}_i p $ and $ p'[\a_i]\le p[\a_i] $. Suppose NI holds. Then $ p'\succ^{sd}_i p $ implies $ p'[\a_i]\ge p[\a_i]  $. So we must have $ p'[\a_i]= p[\a_i] $. Then $ p'\succ^{sd}_i p $ requires that there exist $ a\in \a_i(\gg_i,\o_i) $ such that $ p'[\a_i(\succsim_i,a)]>p[\a_i(\succsim_i,a)] $. That is, $ p $ and $ p' $ are not upper-equivalent. This observation means that, under NI, if an IR, not-strongly-bidominated and strategy-proof mechanism $ \psi $ is strictly stochastically dominated by another strategy-proof mechanism $ \psi' $, then for every $ \succsim_I $ where $ \psi'(\succsim_I) $ and $ \psi(\succsim_I) $ are not welfare-equivalent, they are not upper-equivalent. Suppose Richness also holds. By Lemma \ref{thm2} and Remark \ref{rmk:lemma1:formulation}, $ \psi $ and $ \psi' $ are welfare-equivalent, a contradiction. So under NI and Richness, in the class of IR and strategy-proof mechanisms, not-strongly-bidominated members are on the efficiency frontier.

\begin{corollary}\label{corollary:thm2:2}
	Under NI and Richness, an IR, not-strongly-bidominated and strategy-proof mechanism is not strictly stochastically dominated by any other strategy-proof mechanism.
\end{corollary}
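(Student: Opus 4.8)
The plan is to argue by contradiction and reduce everything to an application of Theorem~\ref{thm2} in its contrapositive form (``not welfare-equivalent only if not upper-equivalent $\Rightarrow$ welfare-equivalent''). Suppose $\psi$ is IR, not-strongly-bidominated, and strategy-proof, and that some strategy-proof mechanism $\psi'$ strictly stochastically dominates $\psi$. I would aim to show that $\psi$ and $\psi'$ are in fact welfare-equivalent, which immediately contradicts the existence of a profile at which $\psi'$ strictly stochastically dominates $\psi$.

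The crux is a purely allocation-level observation: \emph{if $p$ is an IR, not-strongly-bidominated allocation and $p'$ strictly stochastically dominates $p$, then under NI the two are not upper-equivalent}. To prove it, I would first use the definition of strong bidominance: since $p'$ does not strongly bidominate $p$ but does strictly stochastically dominate it, either the clause ``$p'\succ^{sd}_i p\Rightarrow p'[\a_i]>p[\a_i]$'' fails for some $i$, or $p'$ fails to strictly size dominate $p$; in the latter case the allocation-level version of Theorem~\ref{thm1}(1) (weak stochastic dominance $\Rightarrow$ weak size dominance for IR allocations under NI) forces $p'[\a_i]=p[\a_i]$ for every $i$, so one may pick any $i$ with $p'\succ^{sd}_i p$. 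Either way we obtain $i$ with $p'\succ^{sd}_i p$ and $p'[\a_i]=p[\a_i]$ (weak size dominance gives the reverse inequality). Next, $p'\succ^{sd}_i p$ yields a strict inequality $p'[\a_i(\succsim_i,a)]>p[\a_i(\succsim_i,a)]$ at some strictly acceptable $a$: IR forces equality (both sides $=1$) at every $a$ with $\o_i\succsim_i a$, and for strictly acceptable $a$ one has $\a(\succsim_i,a)=\a_i(\succsim_i,a)$. Finally this $a$ cannot lie in the worst strictly-acceptable indifference class $\a_i(\succ_i,\o_i)\backslash\a_i(\gg_i,\o_i)$, because for such $a$ one checks (using NI and completeness) that $\a_i(\succsim_i,a)=\a_i(\succ_i,\o_i)$, while IR gives $p[\a_i(\succ_i,\o_i)]=p[\a_i]$ and $p'[\a_i(\succ_i,\o_i)]=p'[\a_i]$, so a strict inequality there would contradict $p'[\a_i]=p[\a_i]$. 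Hence $a\in\a_i(\gg_i,\o_i)$ with $p'[\a_i(\succsim_i,a)]\ne p[\a_i(\succsim_i,a)]$, i.e.\ $p$ and $p'$ are not upper-equivalent.

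With the observation in hand I would finish as follows. Because $\psi'$ strictly stochastically dominates $\psi$, in particular $\psi'$ weakly stochastically dominates $\psi$ at \emph{every} profile. Fix any $\succsim_I$ at which $\psi(\succsim_I)$ and $\psi'(\succsim_I)$ are not welfare-equivalent; then weak stochastic dominance plus non-welfare-equivalence means $\psi'(\succsim_I)$ strictly stochastically dominates $\psi(\succsim_I)$ at that profile, and since $\psi(\succsim_I)$ is IR and not-strongly-bidominated, the observation gives that $\psi(\succsim_I)$ and $\psi'(\succsim_I)$ are not upper-equivalent. Thus for every $\succsim_I$, non-welfare-equivalence of $\psi(\succsim_I)$ and $\psi'(\succsim_I)$ implies non-upper-equivalence — exactly the hypothesis of Theorem~\ref{thm2}. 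Since $\psi$ and $\psi'$ are strategy-proof and Richness holds, Theorem~\ref{thm2} yields that $\psi$ and $\psi'$ are welfare-equivalent, contradicting the assumed strict stochastic dominance.

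I expect the allocation-level observation to be the main obstacle, and within it the step showing the witnessing $a$ must belong to $\a_i(\gg_i,\o_i)$ rather than to the lowest strictly-acceptable indifference class. That step is where NI is essential (so that $\a_i(\sim_i,\o_i)=\emptyset$ and ``participation size'' coincides with ``probability of being strictly better off than the outside option'') and where IR is essential (so that probability mass sits only on acceptable allocations and the equality $p'[\a_i]=p[\a_i]$ propagates to the relevant upper-contour-set sums). Everything else is routine bookkeeping with the definitions of (strong) bidominance and upper-equivalence, Theorem~\ref{thm1}(1), and Theorem~\ref{thm2}.
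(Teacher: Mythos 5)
Your proposal is correct and follows essentially the same route as the paper: establish at the allocation level that, under NI and IR, an allocation that strictly stochastically dominates a not-strongly-bidominated allocation must have equal participation size for some strictly-better-off agent and hence must differ from it on some upper contour set inside $\a_i(\gg_i,\o_i)$ (non-upper-equivalence), then invoke Theorem~\ref{thm2}. Your write-up merely fills in the case analysis on strong bidominance and the reason the witnessing $a$ cannot lie in the lowest strictly-acceptable indifference class, steps the paper leaves implicit.
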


Because unbidominated mechanisms are not-strongly-bidominated but the converse is not true, Corollary \ref{corollary:thm2:2} is stronger than the second statement of Corollary \ref{corollary:thm1:2} when NI and Richness both hold. However, Corollary \ref{corollary:thm2:2} relies on a finite $ \a $, but Corollary \ref{corollary:thm1:2} does not.

When we are interested in a subclass of IR and strategy-proof mechanisms, we can define unbidominated/not-strongly-bidominated mechanisms within the subclass. Then Corollary \ref{corollary:thm1:2} and Corollary \ref{corollary:thm2:2} hold within the subclass. We will present an application of this result in Section \ref{section:matching}.
 
\begin{remark}\label{rmk:AM:PCPM}
	AM call a deterministic allocation $ a\in \a $ \textit{Pareto-constrained participation-maximal} if there does not exist $ b\in \a $ that \textit{weakly} Pareto dominates $ a $ and strictly participation expands $ a $ (i.e., $ I(a)\subsetneq I(b) $). It is slightly different from unbidominance. But under NI they coincide, because if $ b $ weakly Pareto dominates $ a $ and strictly participation expands $ a $, then $ b $ must strictly Pareto dominate $ a $. There is no counterpart of strong bidominance in AM's paper.
\end{remark}

\section{Applications}\label{section:application}

We present four applications to show the usefulness of our results. AM have presented several applications to show the usefulness of Theorem \ref{thm1} for deterministic mechanisms. In the first application we show its usefulness for random mechanisms in the object assignment model.
We use the remaining applications to show the usefulness of Lemma \ref{thm2}. In the second application we propose the size improvement relation between mechanisms in the object assignment model and use Lemma \ref{thm2} to obtain an impossibility result.  In the third application we use Lemma \ref{thm2} to prove all results of \cite{hirata2017stable} and a characterization result of \cite{combe2020design} in two matching models. In the last application we use Lemma \ref{thm2} to prove main results of \cite{sonmez1999strategy} and \cite{ehlers2018strategy} in a generalized object assignment model.

\subsection{Strategy-proof stochastic dominance in the object assignment model}\label{section:object}

The object assignment model has been defined in Remark \ref{rmk:object:defn} and \ref{rmk:object:randomallocation}. Recall that $ I $ is a finite set of agents, and $ \tilde{O}\coloneqq O\cup \{\emptyset\} $ is a finite set of heterogeneous objects with $ \emptyset $ being the outside option. Assume that every agent's preference domain consists of of all strict preferences over $ \tilde{O} $. Denote this preference domain by $ \mathcal{P} $. So NI and Richness are satisfied. 

An allocation is represented by a nonnegative matrix $ p=(p_{i,o})_{i\in I, o\in \tilde{O}} $ where every $ p_{i,o} $ is the probability share of $ o\in \tilde{O} $ assigned to $ i $. For every $ i $ and $ o $, we define $ \norm{p_i}\coloneqq \sum_{o\in O}p_{i,o} $, $ \norm{p_o}\coloneqq \sum_{i\in I}p_{i,o} $, and $ \norm{p}\coloneqq \sum_{i\in I}\norm{p_i} $. So $ \norm{p_i} $ is $ i $'s \textit{participation size} in $ p $. We call $ \norm{p} $ the \textit{size} of $ p $. \cite{erdil2014strategy} says that a mechanism $ \psi' $ is of \textit{greater size} than another mechanism $ \psi $ if, for every $ \succsim_I $, $ \norm{\psi'(\succsim_I)}\ge \norm{\psi(\succsim_I)} $, and for some $ \succsim'_I $, $ \norm{\psi'(\succsim'_I)}> \norm{\psi(\succsim'_I)} $. If $ \psi' $ strictly size dominates $ \psi $, then $ \psi' $ is of greater size than $ \psi $. But the converse is not true.

An allocation $ p $ is \textit{individually rational} (IR) if, for every $ i\in I $ and $ o\in O $, $ p_{i,o}>0 \Rightarrow o \succ_i \emptyset $.  An allocation $ p $ is \textit{non-wasteful} if it is IR and there do not exist $ i\in I$ and $ o\in O $ such that $ \norm{p_o}<q_o $ and $ o\succ_i o' $ for some $ o'\in \tilde{O} $ with $ p_{i,o'}>0 $. Below we prove that if an allocation is non-wasteful, it is unbidominated. But the converse is not true.

\begin{lemma}
	In the object assignment model with strict preferences, a non-wasteful allocation must be unbidominated, but an unbidominated allocation may be wasteful.
\end{lemma}
\begin{proof}
	Suppose a non-wasteful allocation $ p $ is bidominated by another allocation $ p' $. Then it must be that $ \norm{p'} >\norm{p}$. So there exists $ o\in O $ such that $ \norm{p_o}<\norm{p'_o}\le q_o $, and therefore there exists $ i\in I $ such that $ p_{i,o}<p'_{i,o} $. Because $ p'_i $ weakly stochastically dominated $ p_i $ for $ i $, $ \sum_{o'\succ_i o}p'_{i,o'}\ge \sum_{o'\succ_i o}p_{i,o'} $. Because  $ p_{i,o}<p'_{i,o} $, we have $ \sum_{o'\succsim_i o}p'_{i,o'}> \sum_{o'\succsim_i o}p_{i,o'} $. This means that $\sum_{o'\succsim_i o}p_{i,o'}<1  $, and therefore there must exist $ o'\in \tilde{O} $ such that $ o\succ_i o' $ and $ p_{i,o'}>0 $. However, given $ \norm{p_o}< q_o $, this means that $ p $ is wasteful, which is a contradiction.
	
	On the other hand, we give an example of unbidominated allocations that are wasteful. Suppose there are two agents $ i,j $ and two objects $ o,o' $, each having two copies. Both agents prefer $ o $ to $ o' $ and prefer $ o' $ to $ \emptyset $. The allocation in which both agents obtain $ (1/2o, 1/2o') $ is unbidominated because every agent's participation size is already one. But it is wasteful because $ o $ is wasted.
\end{proof}

Now we are ready to present the implication of Theorem \ref{thm1}. Because agents have strict preferences, if two mechanisms are welfare-equivalent, they coincide.

\begin{proposition}\label{corollary:object:allocation}
	In the object assignment model with strict preferences, for any two IR and strategy-proof mechanisms $ \psi $ and $ \psi' $:
	\begin{enumerate}
		\item $ \psi' $ strictly size dominates $ \psi $ $ \Longleftrightarrow $ $ \psi' $ strictly stochastically dominates $ \psi $;
		
		\item $ \psi' $ is size-equivalent to $ \psi $ $ \Longleftrightarrow $ $ \psi' = \psi $;
		
		\item If $ \psi $ is not-strongly-bidominated, $ \psi $ is not strictly stochastically dominated by $ \psi' $.
	\end{enumerate}
\end{proposition}

We obtain Erdil's following results as corollaries.

\begin{corollary}[\cite{erdil2014strategy}]\label{corollary:erdil}
	In the object assignment model with strict preferences:
	\begin{enumerate}
		\item If a strategy-proof mechanism is non-wasteful, it is not strictly stochastically dominated by any other strategy-proof mechanism;
		
		\item If a strategy-proof mechanism $ \psi' $ strictly stochastically dominates another strategy-proof mechanism $ \psi $, then $ \psi' $ is of greater size than $ \psi $.
	\end{enumerate}
\end{corollary}

Because agents' preference domains accommodate truncation, Proposition \ref{corollary:object:allocation} actually holds for IR and truncation-invariant mechanisms.  For example, because the probabilistic serial mechanism (PS) and the boston mechanism (BM) are truncation-invariant, Proposition \ref{corollary:object:allocation} implies that a non-wasteful and strategy-proof mechanism is not strictly stochastically dominated by PS, and a non-wasteful and strategy-proof deterministic mechanism (such as DA) is not strictly Pareto dominated by BM.\footnote{Here we regard DA and BM as mechanisms to solve the object assignment model, and regard priorities used by objects to rank agents as elements of these mechanisms.} These implications cannot be obtained from Erdil's results.

\begin{remark}\label{rmk:erdil:RP}
	\cite{erdil2014strategy} also shows that the random priority mechanism (RP) is wasteful, but any strategy-proof mechanism that strictly stochastically dominates RP is still wasteful. So on the efficiency frontier of IR and strategy-proof mechanisms, some mechanisms have to be wasteful.  \cite{zhangsizeinvariance} shows that this is not due to RP as a special benchmark: any strategy-proof mechanism that strictly stochastically dominates a mechanism of the same efficiency and fairness properties as RP must be strongly bidominated (and therefore wasteful).
\end{remark}

\subsection{Size improvement in the object assignment model}\label{section:size:improve}

We consider the following question in this subsection. Suppose a mechanism has been used in an object assignment market, but we want to replace it with a new one to assign more acceptable objects to agents. We ask whether the replacement is possible if the two mechanisms satisfy desirable properties, and agents' welfare in the old mechanism needs to be respected (at least in a weak sense). 

We are motivated by markets such as kidney exchange and refugee resettlement. At least two features in these markets capture our attention. First, not all resources in these markets are acceptable to agents. A donated kidney may be medically incompatible with a patient. A refugee may be unable to live in a country/city for various reasons. Second, being unmatched is very undesirable in these markets. It is more important for an agent to get an acceptable resource than to get a better resource when an acceptable resource has been guaranteed. This explains why assigning more acceptable objects to agents can be an important goal in these markets.\footnote{Optimization and market design techniques have been used in kidney exchange and refugee resettlement to increase the number of acceptable assignments. See \cite{roth2005pairwise,sonmez2017incentivized,kratz2019pairwise,andersson2016assigning}.} 

In our question, if the new mechanism Pareto improves the old, all agents will be glad in the replacement. But if some agents are harmed, they may complain and oppose the replacement.  We assume that agents understand that assigning more objects to agents is a desirable goal for the market. But because being unmatched is very undesirable, an agent will complain if and only if he is matched in the old mechanism but becomes unmatched in the new one. In this case we assume that the only excuse we can use to justify the replacement is that strictly more objects are assigned to agents in the new mechanism. In other words, if the new mechanism does not assign strictly more objects, every matched agent in the old mechanism has to be matched in the new one. 

Formally, we consider the object assignment model, allow for weak preferences, and assume that agents are never indifferent between real objects and the outside option of being unmatched. Let $ \mathcal{B} $ denote the set of complete and transitive preference relation $ \succsim $ on $\tilde{O}  $ such that $ \{o\in O:o\sim \emptyset\}=\emptyset $. $ \mathcal{B} $ is the preference domain of all agents. So NI and Richness are satisfied. Size improvement is defined as follows.

\begin{definition}
	A new mechanism $ \psi' $ \textit{size improves} an old mechanism $ \psi $ if
	\begin{enumerate}
		\item $\forall \succsim_I$, $ \norm{\psi'(\succsim_I)}\ge \norm{\psi(\succsim_I)} $, and $\exists \succsim'_I $, $ \norm{\psi'(\succsim'_I)}>\norm{\psi(\succsim'_I)} $;
		
		\item $\forall \succsim_I$, $ \norm{\psi'(\succsim_I)}=\norm{\psi(\succsim_I)} \implies \psi'(\succsim_I)$ is size-equivalent to $ \psi(\succsim_I) $.
	\end{enumerate} 
\end{definition}

The above definition is applicable to deterministic mechanisms as well as to random mechanisms. In random mechanisms agents compare their participation sizes.

Proposition \ref{prop:sizeimprove} is our main result in this subsection. 

\begin{proposition}\label{prop:sizeimprove}
	In the object assignment model with weak preferences, an IR, unbidominated, and truncation-invariant mechanism is not size improved by any other IR and truncation-invariant mechanism.
\end{proposition}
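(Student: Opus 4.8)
Suppose toward a contradiction that $\psi'$ is an IR and truncation-invariant mechanism that size improves $\psi$. The crux is the following lemma: \emph{$\psi'$ weakly size dominates $\psi$}, i.e.\ $\psi'(\succsim_I)[\a_i]\ge\psi(\succsim_I)[\a_i]$ for every profile $\succsim_I$ and every $i$. Granting the lemma the conclusion is immediate: condition~(1) in the definition of size improvement yields a profile $\succsim^*_I$ with $\norm{\psi'(\succsim^*_I)}>\norm{\psi(\succsim^*_I)}$, so by the lemma $\psi'(\succsim^*_I)$ strictly size dominates $\psi(\succsim^*_I)$, with $\psi'(\succsim^*_I)[\a_j]>\psi(\succsim^*_I)[\a_j]$ for some $j$. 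Since $\psi,\psi'$ are IR and truncation-invariant and Richness holds here, the truncation-invariant version of Theorem~\ref{thm1}(2) (Section~\ref{section:truncation-invariant}) gives $\psi'(\succsim^*_I)\succsim^{sd}_i\psi(\succsim^*_I)$ for every $i$; evaluating this stochastic-dominance inequality for $j$ at a worst strictly acceptable object of $j$ --- whose upper contour set is exactly the set of $j$'s strictly acceptable objects and therefore carries probability mass equal to $j$'s participation size, by IR --- turns it into a strict inequality. Hence $\psi'(\succsim^*_I)$ both strictly stochastically dominates and strictly size dominates $\psi(\succsim^*_I)$, i.e.\ it bidominates it, contradicting that $\psi$ is unbidominated.

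To prove the lemma I would induct on $N(\succsim_I)\coloneqq\sum_{i\in I}\abs{\a_i(\succ_i,\emptyset)}$, the total number of strictly acceptable objects. If $N(\succsim_I)=0$ then IR forces every participation size to be $0$ under both mechanisms, so there is nothing to prove. For the inductive step assume the lemma at all profiles with smaller $N$ and suppose $\psi'(\succsim_I)[\a_i]<\psi(\succsim_I)[\a_i]$ for some $i$. If the shortfall is ``seen by a truncation'', i.e.\ $\psi'(\succsim_I)[\a_i(\succsim_i,a)]<\psi(\succsim_I)[\a_i(\succsim_i,a)]$ for some $a\in\a_i(\gg_i,\emptyset)$, then truncating $i$ at $a$ and using truncation-invariance of $\psi$ and $\psi'$ together with IR (so that in the truncated profile participation equals the mass on $\a_i(\succsim_i,a)$) produces a violation of the lemma at $(\succsim^a_i,\succsim_{-i})$, a profile with strictly smaller $N$, contradicting the induction hypothesis.

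The remaining --- and main --- case is when $\psi'(\succsim_I)[\a_i(\succsim_i,a)]\ge\psi(\succsim_I)[\a_i(\succsim_i,a)]$ for all $a\in\a_i(\gg_i,\emptyset)$ (including $\a_i(\gg_i,\emptyset)=\emptyset$), so the shortfall lies entirely in $i$'s worst strictly acceptable indifference class; applying the same truncation-plus-induction argument to every agent shows that $\psi'(\succsim_I)$ weakly dominates $\psi(\succsim_I)$ on every agent's upper contour sets above the worst acceptable class. The plan here is to build from $\psi'(\succsim_I)$ a \emph{feasible} allocation $q$ that keeps those ``above the worst class'' masses and, using the capacity slack freed up by trimming the worst classes of the agents whom $\psi'$ over-serves, moves probability off $\emptyset$ onto worst-class copies for the under-served agents so that $q[\a_l]\ge\psi(\succsim_I)[\a_l]$ for all $l$ with strict inequality somewhere; such a $q$ weakly stochastically dominates $\psi(\succsim_I)$ (equality above the worst classes, and $\ge$ in participation everywhere) and strictly size dominates it, hence bidominates it, contradicting unbidominatedness of $\psi$ and closing the induction. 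Verifying that this reshuffling of worst-class copies is feasible is the step I expect to be the real work --- it is the one place the argument genuinely uses the object-assignment structure (capacities, Birkhoff--von Neumann) rather than the abstract model, and the one place that needs $\psi$ to be unbidominated rather than merely not-strongly-bidominated; everything else is the Theorem~\ref{thm2}-style device of pushing a violation down by truncation until it can no longer remain hidden.
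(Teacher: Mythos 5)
Your overall architecture is sound and genuinely different from the paper's: where the paper proves that an unbidominated IR allocation and an IR allocation of strictly larger size can never be \emph{upper-equivalent} (Lemma~\ref{lemma:random}, a several-page network argument in Appendix~\ref{appendix:proof:size-improve}) and then feeds this into the iterative-contraction machinery of Theorem~\ref{thm2}, you instead try to establish directly that $\psi'$ weakly size dominates $\psi$, by induction on the total number of strictly acceptable objects, and then close via Theorem~\ref{thm1}(2). Your Case~A (a shortfall visible on some upper contour set $\a_i(\succsim_i,a)$ with $a\in\a_i(\gg_i,\emptyset)$ is pushed down by truncation to a smaller profile) is correct and is the same device the paper uses inside Theorem~\ref{thm2}. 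Your concluding step, granting the lemma, is also fine.

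The genuine gap is Case~B, and it is not a routine verification: it is precisely the content of the paper's Lemma~\ref{lemma:random}, which occupies all of Appendix~\ref{appendix:proof:size-improve}. Two specific problems. First, your reduction overstates what Case~A delivers: the induction hypothesis only gives $\psi'(\succsim_I)[\a_l(\succsim_l,a)]\ge\psi(\succsim_I)[\a_l(\succsim_l,a)]$ for all $l$ and $a\in\a_l(\gg_l,\emptyset)$, i.e.\ weak domination of \emph{cumulative} upper-contour masses, not the ``equality above the worst classes'' that your construction of $q$ assumes; with only these cumulative inequalities, neither the stochastic dominance of $q$ over $\psi(\succsim_I)$ nor the per-indifference-class bookkeeping needed for feasibility comes for free. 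Second, the proposed reshuffling is too naive: the capacity freed by trimming the worst-class assignments of over-served agents lives on \emph{their} worst-class objects, which need not be acceptable to the under-served agent $i$ (whose shortfall sits on \emph{his} worst class, possibly on objects that are fully assigned in $\psi'(\succsim_I)$ to agents who rank them above their own worst classes and so cannot be displaced without violating stochastic dominance). Transporting the slack to where $i$ can use it requires chaining augmenting paths through the bipartite agent--object graph and showing the chain terminates at an object with genuine slack --- this is exactly Steps one through four of the paper's proof of Lemma~\ref{lemma:random}, and it is the step you have flagged but not supplied. Until that construction (or an adaptation of Lemma~\ref{lemma:random} to the $\ge$ setting) is carried out, the proof is incomplete at its load-bearing point.
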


Because strict size dominance implies size improvement, but the converse is not true, Proposition \ref{prop:sizeimprove} is not implied by Theorem \ref{thm1}. A crucial step in the proof of  Proposition \ref{prop:sizeimprove} is to prove the following Lemma \ref{lemma:random}. It enables us to apply Lemma \ref{thm2} to obtain Proposition \ref{prop:sizeimprove}.
	
	\begin{lemma}\label{lemma:random}
		Give any $ \succsim_I $, for any two IR allocations $ p $ and $ p' $, if $ p $ is unbidominated and $ \norm{p'}>\norm{p} $, then $ p $ and $ p' $ are not upper-equivalent. 
	\end{lemma}
\begin{proof}
	The proof is in Appendix \ref{appendix:proof:size-improve}.
\end{proof}
	
\begin{proof}[Proof of Proposition \ref{prop:sizeimprove}]
	Suppose $ \psi $ and $ \psi' $ are two IR and truncation-invariant mechanisms, $ \psi $ is unbidominated, and $ \psi' $ size improves $ \psi $. For every $ \succsim_I $, if $ \psi(\succsim_I) $ and $ \psi'(\succsim_I) $ are not size-equivalent, then it must be that $ \norm{\psi'(\succsim_I)}>\norm{\psi(\succsim_I)} $. By Lemma \ref{lemma:random}, $ \psi'(\succsim_I) $ and $ \psi(\succsim_I) $ are not upper-equivalent. Because NI holds, if $ \psi(\succsim_I) $ and $ \psi'(\succsim_I) $ are not size-equivalent, they are not welfare-equivalent. So  $ \psi(\succsim_I) $ and $ \psi'(\succsim_I) $ are not welfare-equivalent only if they are not upper-equivalent. By Lemma \ref{thm2} (and Remark \ref{rmk:lemma1:formulation}), $ \psi' $ and $ \psi $ must be welfare-equivalent. So they are size-equivalent, which is a contradiction.
\end{proof}

The implication of Proposition \ref{prop:sizeimprove} is as follows. Suppose the old mechanism is not too inefficient, and the two mechanisms satisfy properties that imply truncation-invariance. If it is important to respect agents' welfare in the old mechanism, then the replacement is not implementable, while if a market designer implements the replacement, then it can be inevitable to harm some agents without helping more of the others.

Proposition \ref{prop:sizeimprove} also holds when agents' preferences are strict. When the preference domain is $ \mathcal{P} $, the proof of Lemma \ref{lemma:random} can be simplified. In the following we apply Proposition \ref{prop:sizeimprove} to several well-studied deterministic mechanisms in the literature: deferred acceptance (DA), boston mechanism (BM), top trading cycle (TTC), serial dictatorship (SD), hierarchical exchange (HE) of \cite{papai2000strategyproof}, and trading cycles (TC) of \cite{pycia2017incentive}. Some of these mechanisms involve priorities, which we regard as elements of these mechanisms. All of these mechanisms except BM are strategy-proof.\footnote{\cite{pycia2017school} extend TC and HE to the environment where each object may have several copies, and prove that the extended mechanisms are Pareto efficient and strategy-proof.} As we have explained, the algorithmic procedure of BM implies truncation-invariance. All of these mechanisms except DA is Pareto efficient. DA is unbidominated because it is non-wasteful. So we obtain the following corollary to Proposition \ref{prop:sizeimprove}.

\begin{corollary}
	In the object assignment model with strict preferences, none of DA, BM, TTC, SD, HE, and TC is size improved by any other truncation-invariant mechanism. In particular, they do not size improve each other.
\end{corollary}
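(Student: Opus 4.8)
The plan is to derive the corollary directly from Proposition~\ref{prop:sizeimprove}, applied in its strict-preference form (valid on the domain $\mathcal{P}$ by the remark following that proposition). It suffices to verify that each of DA, BM, TTC, SD, HE, and TC is (i) individually rational, (ii) truncation-invariant, and (iii) unbidominated. Given this, Proposition~\ref{prop:sizeimprove} immediately says none of the six is size improved by any other IR and truncation-invariant mechanism; and since the six mechanisms are themselves IR and truncation-invariant, none of them size improves another, which is the ``in particular'' clause.

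For (i), each of these mechanisms assigns an agent only objects acceptable to her, so each is IR. For (ii), DA, TTC, SD, HE, and TC are strategy-proof, and by the discussion following the definition of strategy-proofness a strategy-proof mechanism is contraction-invariant; since $\mathcal{P}$ accommodates truncation (a truncation of a strict preference is strict), contraction-invariance specializes to truncation-invariance. BM is not strategy-proof, but as explained in Section~\ref{section:truncation-invariant} the Boston algorithm makes, in each round, permanent assignments that depend only on the preferences revealed up to that round; truncating $\succsim_i$ at some $a\in\a_i(\gg_i,\o_i)$ does not change the objects $i$ applies for in any round up to the one in which $i$ would secure an object in $\a_i(\succsim_i,a)$, so $\psi(\succsim_I)[\a_i(\succsim_i,a)]$ is unchanged, which is exactly truncation-invariance.

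For (iii), DA is non-wasteful in this model (with priorities regarded as part of the mechanism), and a non-wasteful allocation is unbidominated by the footnote argument in Section~\ref{section:object}. The remaining five mechanisms are Pareto efficient and always return a deterministic allocation $p$, and such a $p$ is unbidominated: if some $p'$ bidominated $p$, then $p'\succsim^{sd}_i p$ for every $i$; writing $p'$ as a lottery over deterministic allocations by Birkhoff--von Neumann and evaluating the stochastic-dominance inequality at $o$ equal to the object $p$ assigns $i$, we get that every deterministic allocation in the support of $p'$ gives each $i$ an object weakly $\succsim_i$-preferred to her $p$-assignment; by strictness of preferences and Pareto efficiency of $p$, every such allocation must equal $p$, so $p'=p$, contradicting $p'\succ^{sd}_j p$. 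Hence all six mechanisms satisfy (i)--(iii), and the corollary follows from Proposition~\ref{prop:sizeimprove}.

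The only step that is not routine bookkeeping against already-established facts is (ii) for BM, the sole non-strategy-proof mechanism in the list: its truncation-invariance cannot be read off an incentive property and has to be deduced from the round-by-round structure of the Boston algorithm. That verification (together with the short Birkhoff--von Neumann argument in (iii)) is where I expect the actual content to lie; everything else is an immediate appeal to Proposition~\ref{prop:sizeimprove} and to facts recalled in Sections~\ref{section:object} and~\ref{section:truncation-invariant}.
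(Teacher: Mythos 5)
Your proposal is correct and follows essentially the same route as the paper, which likewise obtains the corollary by checking that each of the six mechanisms is IR, truncation-invariant (via strategy-proofness for all but BM, and via the round-by-round structure of the Boston algorithm for BM), and unbidominated (DA by non-wastefulness, the rest by Pareto efficiency), and then invoking Proposition~\ref{prop:sizeimprove} in its strict-preference form. The only difference is that you spell out details the paper leaves implicit, such as the Birkhoff--von Neumann argument that a Pareto efficient deterministic allocation cannot be strictly stochastically dominated.
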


When agents have strict preferences, because PS is IR, non-wasteful and truncation-invariant, Proposition \ref{prop:sizeimprove} implies that PS is not size improved by any other truncation-invariant mechanism. In Appendix \ref{appendix:size:improve} we construct a mechanism that strictly size dominates PS. The mechanism is not truncation-invariant, but it is ordinally efficient and envy-free, which are the flagship properties of PS. This answers an open question raised by \cite{huang2017guaranteed} on the existence of an ordinally efficient and envy-free mechanism that improves the size of PS. This also shows that truncation-invariance of the improving mechanism is necessary for Proposition \ref{prop:sizeimprove} to hold. The other parts of Proposition \ref{prop:sizeimprove} are also tight. In Appendix \ref{appendix:size:improve} we also construct an IR and unbidominated mechanism that is not truncation-invariant, but is strictly size dominated by PS. So truncation-invariance of the old mechanism is necessary for Proposition \ref{prop:sizeimprove} to hold. \cite{erdil2014strategy} has constructed a strategy-proof mechanism to bidominate RP. So unbidominance of the old mechanism is also necessary for Proposition \ref{prop:sizeimprove} to hold.

\subsection{Two-sided matching}\label{section:matching}

In this subsection we use Lemma \ref{thm2} to unify the results in two matching models. Mechanisms in these models are deterministic. In the \textbf{first} model known as  many-to-one matching with contracts, a lot of attention in the literature has been paid to the existence and strategy-proofness of stable mechanisms. It is well-known that substitutability, law of aggregate demand (LAD), and irrelevance of rejected contracts (IRC) of hospitals' choice functions are sufficient and in some sense necessary for the existence and strategy-proofness of the doctor-optimal stable mechanism (DOSM), which finds the Pareto optimal stable matching for doctors \citep{hatfield2005matching,aygun2013matching}. Actually, under these assumptions DOSM is the only stable and strategy-proof mechanism. However, if only substitutability and IRC are assumed, DOSM exists but may not be strategy-proof, and if only IRC is assumed, stable mechanisms may not exist.

Surprisingly, \cite{hirata2017stable} (HK for short) prove that the uniqueness of stable and strategy-proof mechanisms and the special role of DOSM hold more fundamentally than what the literature believed before. Only assuming IRC, HK prove that there is at most one stable and strategy-proof mechanism, and if DOSM exists, it is the only candidate. HK also propose a notion of \textit{non-wastefulness} (different than the notion in the object assignment model), and prove that a non-wasteful and strategy-proof mechanism is not strictly Pareto dominated by any other individually rational (for both doctors and hospitals) and strategy-proof mechanism.  Under IRC, stable mechanisms are non-wasteful. So a stable and strategy-proof mechanism is not strictly Pareto dominated by any other individually rational and strategy-proof mechanism.

\begin{proposition}[\cite{hirata2017stable}]\label{prop:HK}
	In the many-to-one matching with contracts model, under IRC:
	\begin{itemize}
		\item There is at most one stable and strategy-proof mechanism, and if DOSM exists, it is the only candidate; 
		
		\item A non-wasteful and strategy-proof mechanism is not strictly Pareto dominated by any other individually rational and strategy-proof mechanism.
	\end{itemize}
\end{proposition}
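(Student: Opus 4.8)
The plan is to place many-to-one matching with contracts inside the abstract model of Section~\ref{section:model}: the agents are the doctors, an allocation is a matching (a set of contracts in which every doctor appears at most once), a doctor participates in a matching exactly when he signs a contract, and his outside option is being unmatched. Since doctors have strict preferences over contracts and the preference domain contains every strict ranking of the contracts together with $\emptyset$, NI and Richness hold and the domain accommodates truncation; moreover a matching is determined by the contract each doctor signs, so welfare-equivalence of matchings is simply equality. Hence Theorem~\ref{thm2} and Corollary~\ref{corollary:thm1:2} apply to any two strategy-proof mechanisms in this model, and this is essentially all the abstract machinery the proof needs.

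For part~(1), let $\psi$ and $\psi'$ be stable and strategy-proof; under IRC every stable mechanism is non-wasteful in HK's sense. The heart of the argument is the structural claim that, at every profile, two stable matchings that are upper-equivalent are equal. One first checks, by a short order-theoretic computation using strictness of preferences, that upper-equivalence of matchings $Y$ and $Y'$ forces each doctor either to sign the same contract in $Y$ and $Y'$ or to be unmatched in one of them and sign his worst acceptable contract in the other; one then excludes the latter case by noting that if $d$ is unmatched in $Y$ and signs his worst acceptable contract $x$ in $Y'$, comparing the contracts that hospital $h(x)$ holds under the two matchings and invoking non-wastefulness of $Y$ contradicts stability. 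Granting the claim, any two stable and strategy-proof mechanisms satisfy the hypothesis of Theorem~\ref{thm2}, so they are welfare-equivalent, i.e.\ identical; hence there is at most one such mechanism. If the doctor-optimal stable mechanism exists, it returns at every profile a stable matching weakly Pareto dominating the one returned by any stable mechanism $\psi$; truncating every doctor's preferences at his doctor-optimal contract and using truncation-invariance of DOSM (established in Section~\ref{section:truncation-invariant}) together with strategy-proofness of $\psi$ reduces matters to a profile where the structural claim applies and pins $\psi$ down to DOSM.

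For part~(2), suppose $\psi$ is non-wasteful and strategy-proof while $\psi'$ is individually rational, strategy-proof, and weakly Pareto dominates $\psi$ everywhere; the goal is to contradict strict domination by showing $\psi$ and $\psi'$ are welfare-equivalent through Theorem~\ref{thm2}. Fix a profile and assume $\psi(\succsim_I)$ and $\psi'(\succsim_I)$ are upper-equivalent. The same dichotomy as above, now combined with $\psi'\succsim\psi$ (which eliminates the branch where a doctor is unmatched under $\psi'$ but matched under $\psi$), shows every doctor signs the same contract in both matchings except possibly a doctor unmatched under $\psi(\succsim_I)$ and signing his worst acceptable contract under $\psi'(\succsim_I)$; in particular each hospital's contracts under $\psi(\succsim_I)$ are contained in its contracts under $\psi'(\succsim_I)$. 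Were such a doctor to exist, non-wastefulness of $\psi(\succsim_I)$ would again be violated, so $\psi(\succsim_I)=\psi'(\succsim_I)$ and the hypothesis of Theorem~\ref{thm2} is met.

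The main obstacle throughout is the structural claim: converting ``upper-equivalent'' into ``equal'' for stable, respectively non-wasteful, matchings. Reducing to the ``same contract, or unmatched versus worst acceptable'' dichotomy is routine, but ruling out the exceptional configuration is precisely the point at which the argument must leave the abstract model and use HK's notion of non-wastefulness (and IRC) to reason about hospitals' choice functions; making the DOSM-uniqueness reduction bottom out at a profile where the structural claim can be invoked is the other delicate step.
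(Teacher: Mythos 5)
Your overall architecture matches the paper's: reduce the proposition to a structural lemma saying that two distinct stable (resp.\ non-wasteful versus dominating individually rational) matchings cannot be upper-equivalent, feed that into Theorem~\ref{thm2}, and handle the DOSM clause via truncation-invariance of DOSM (the paper's Lemma~\ref{lemma:matching:truncation-invariance}) together with the fact that Theorem~\ref{thm2} only needs truncation-invariance. Your part~(2) is a correct direct unrolling of what the paper packages as Lemma~\ref{lemma:matching:nonwasteful} plus Corollary~\ref{corollary:thm2:2}: the containment $\psi(\succsim_I)\subseteq\psi'(\succsim_I)$ you derive from the dichotomy and weak domination is exactly the paper's observation that strong bidominance among individually rational matchings is strict set inclusion, and HK-non-wastefulness then kills the strict inclusion.

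The genuine weak point is the exceptional case of the structural claim in part~(1). Your dichotomy from upper-equivalence is right (each doctor signs the same contract, or is unmatched in one matching and holds his worst acceptable contract in the other), but ruling out the second branch for two \emph{stable} matchings is not achieved by ``non-wastefulness of $Y$ at hospital $h(x)$'': unlike in part~(2) you do not get $Y\subseteq Y'$, since other doctors may be in the reverse exceptional configuration, and $Y\cup\{x\}$ need not be individually rational for $h(x)$ because the choice function may reject $x$ or drop other contracts, so non-wastefulness of $Y$ gives nothing directly. The paper's Lemma~\ref{lemma:matching:upper-equivalent} instead analyzes $C_h(X'\cup X'')$ for the \emph{union} of the two matchings: IRC forces $C_h(X'\cup X'')$ either to equal $x(h,X'')$ or to contain a contract of $X'\setminus X''$, the latter producing a blocking coalition against $X''$ (whose doctors are unmatched in $X''$ and strictly prefer their $X'$ contracts); then $C_h(X')=x(h,X')\neq x(h,X'')$ and IRC force $x(h,X'')\setminus x(h,X')\neq\emptyset$, and the symmetric argument yields $C_h(X'\cup X'')=x(h,X')$ as well, a contradiction. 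So the step you yourself flag as delicate genuinely requires the no-blocking condition of both matchings applied to their union, and as sketched your argument would not close there.
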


To explain how HK's results are implied by Lemma \ref{thm2} and actually can be extended to truncation-invariant mechanisms, we need to briefly define his model. The model consists of a finite set of doctors $ D $, a finite set of hospitals $ H $, and a finite set of contracts $ X $. Each contract $ x\in X $ involves a doctor $ d(x) $ and a hospital $ h(x) $. A subset $ X'\subset X$ is a \textit{matching} if it includes at most one contract for each doctor; among $ X' $ denote by $ x(d,X') $ the contract involving $ d $ and by $ x(h,X') $ the set of contracts involving $ h $. Each $ d\in D$ has a strict preference relation $ \succsim_d $ over $ \{x\in X:d(x)=d\}\cup \{\emptyset\} $, where $ \emptyset $ is a null contract. All strict preferences are possible. Each $ h\in H $ has a fixed choice function $ C_h $: for all $ X'\subset X $, $ C_h(X')=C_h(x(h,X'))\subset x(h,X') $.  $ C_h $ satisfies \textit{IRC} if for all $ X'\subset X $ and $ x\in X $, $ x\notin C_h(X'\cup \{x\}) $ implies $ C_h(X'\cup \{x\})=C_h(X') $.
A matching $ X' $ is \textit{individually rational} if $ x(d,X')\succsim_d \emptyset $ for all $ d $ and $ C_h(X')=x(h,X') $ for all $ h $. $ X' $ is \textit{blocked} by $ (h,X'') $ with $ h\in H $ and $ X''\subset X$ if $ X''\cap X'=\emptyset $, $ x(h,X'')\subset C_h(X'\cup X'')$ and $ x(d,X'')\succ_d x(d,X') $ for all $ d\in \{d(x)\}_{x\in X''} $. $ X' $ is \textit{stable} if it is individually rational and unblocked. $ X' $ is \textit{non-wasteful} if it is individually rational and there is no another individually rational matching $ X'' $ such that $ X'\subsetneq X'' $. A mechanism is stable or individually rational if it finds a stable or individually rational matching for every preference profile of doctors $ \succsim_D=(\succsim_d)_{d\in D} $.

HK have proved the following Lemma \ref{lemma:matching:upper-equivalent}. In our terminology, the lemma says that if two stable mechanisms find different matchings for a preference profile of doctors, the matchings are not upper-equivalent. By Lemma \ref{thm2} (and Remark \ref{rmk:lemma1:formulation}), it implies that any two stable and strategy-proof/truncation-invariant mechanisms must coincide.

\begin{lemma}[\cite{hirata2017stable}]\label{lemma:matching:upper-equivalent}
 Under IRC, if there exist two stable mechanisms $ \psi $ and $ \psi' $, then for every preference profile $ \succsim_D $ where $ \psi(\succsim_D)\neq \psi'(\succsim_D) $, there exists $ d\in D $ such that $ \emptyset \neq x(d,X')\neq x(d,X'')\neq \emptyset $.
\end{lemma}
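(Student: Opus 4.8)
The plan is to show that if two stable mechanisms $\psi$ and $\psi'$ produce distinct matchings $X' = \psi(\succsim_D)$ and $X'' = \psi'(\succsim_D)$ at some profile $\succsim_D$, then some doctor strictly ranks his two assigned contracts in a way witnessing that the matchings differ on an allocation that is ``far better'' than his outside option for \emph{one} of the two matchings — i.e. there is a doctor $d$ with $\emptyset \neq x(d,X') \neq x(d,X'') \neq \emptyset$. Since preferences of doctors are strict, upper-equivalence in this model fails exactly when such a $d$ exists (the upper contour set of the better of the two contracts differs), so this lemma is the bridge to Theorem~\ref{thm2}.

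First I would set up the comparison between the two stable matchings. Define the set of doctors who ``prefer'' $X'$ to $X''$, i.e. $D_1 = \{d : x(d,X') \succ_d x(d,X'')\}$, and symmetrically $D_2$; because $X' \neq X''$ and doctor preferences are strict, at least one of these is nonempty, say $D_1 \neq \emptyset$. The first key step is a standard rural-hospitals / Blair-type argument: using stability of $X''$ together with IRC, I would show that the contracts that $D_1$-doctors hold in $X'$ constitute a set that, hospital by hospital, the hospitals would be willing to take if offered alongside what they hold in $X''$; then stability of $X''$ forces either a blocking pair (contradiction) or that the relevant doctors are in fact already individually rational and non-$\emptyset$ in \emph{both} matchings. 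Concretely, I expect to argue that for any $d \in D_1$, since $x(d,X') \succ_d x(d,X'') \succsim_d \emptyset$ by individual rationality of $X''$, we get $x(d,X') \succ_d \emptyset$, so $x(d,X') \neq \emptyset$; the work is in showing some such $d$ also has $x(d,X'') \neq \emptyset$ rather than $x(d,X'')=\emptyset$.

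The main obstacle will be exactly that last point: ruling out the case where every doctor who is better off under $X'$ is unmatched under $X''$ (which would only give a difference ``immediately above'' the outside option, not in $\mathcal{A}_i(\gg_i,\o_i)$, hence not a violation of upper-equivalence). Here I would invoke IRC crucially via a choice-function-monotonicity argument at the hospital level. If every $d \in D_1$ has $x(d,X'')=\emptyset$, then the contracts $\{x(d,X')\}_{d\in D_1}$ are entirely disjoint from $X''$; I would argue that some hospital $h$ receiving such a contract, together with $x(h,X'')$ and IRC (which guarantees $C_h$ depends only on the presented contracts and behaves consistently under adding rejected contracts), must chose at least one contract from $D_1$ at $h$, producing a block of $X''$ — contradicting stability of $X''$. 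This forces at least one $d\in D_1$ with $x(d,X'')\neq\emptyset$, and since $d\in D_1$ gives $x(d,X')\succ_d x(d,X'')$, in particular $x(d,X')\neq x(d,X'')$ and $x(d,X')\neq\emptyset$, completing the chain $\emptyset \neq x(d,X') \neq x(d,X'') \neq \emptyset$. I would then note symmetrically that the same conclusion is reached starting from $D_2$ if $D_1=\emptyset$, so the lemma holds in all cases.
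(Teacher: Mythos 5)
There is a genuine gap at the exact point you flag as ``the main obstacle.'' Your plan is to rule out the bad case (every doctor who is better off under $X'$ is unmatched under $X''$) by arguing that some hospital $h$, when offered the contracts $\{x(d,X')\}_{d\in D_1}$ on top of $x(h,X'')$, ``must choose at least one contract from $D_1$,'' yielding a block of $X''$. Under IRC alone this does not follow: IRC only says that deleting \emph{rejected} contracts leaves the choice unchanged; it places no lower bound on what a hospital accepts when \emph{new} contracts are added, so $h$ may simply have $C_h(X'\cup X'')=x(h,X'')$ and reject every contract in $X'\setminus X''$. (This is precisely the regime Hirata--Kasuya work in: no substitutability, no law of aggregate demand, so the ``rural-hospitals / Blair-type'' machinery you invoke is unavailable.) Your argument uses stability of $X''$ only, and in this rejection case it produces no contradiction at all.

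The paper's proof closes exactly this case with a two-sided argument that uses stability of \emph{both} matchings. After reducing (as you do) to the case where every doctor in $X'\setminus X''$ is unmatched in $X''$ and every doctor in $X''\setminus X'$ is unmatched in $X'$, it shows for a hospital $h$ with $x(h,X'\setminus X'')\neq\emptyset$: either $C_h(X'\cup X'')$ contains some contract of $X'\setminus X''$ (by IRC, if $C_h(X'\cup X'')\neq x(h,X'')$), in which case those contracts block $X''$ --- contradiction; or $C_h(X'\cup X'')=x(h,X'')$, and since $C_h(X')=x(h,X')\neq x(h,X'')$ this forces $x(h,X'')\setminus x(h,X')\neq\emptyset$. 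In that second branch one runs the mirror-image argument at the same $h$ against $X'$: either a subset of $X''\setminus X'$ blocks $X'$ (contradicting stability of $X'$), or $C_h(X'\cup X'')=x(h,X')$, contradicting $C_h(X'\cup X'')=x(h,X'')$. Without this second half your proof cannot be completed; I would rework the key step along these lines rather than trying to force a block of $X''$ directly.
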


Lemma \ref{lemma:matching:truncation-invariance} proves that if DOSM exists, although it may not be strategy-proof, doctor-optimal stability implies that it must be truncation-invariant. Because Lemma \ref{thm2} holds for truncation-invariant mechanisms, if a stable and strategy-proof/truncation-invariant mechanism exists, it must coincide with DOSM.

\begin{lemma}\label{lemma:matching:truncation-invariance}
	Under IRC, if DOSM exists, it is truncation-invariant.
\end{lemma}

\begin{proof}
	Let $ \psi $ denote the doctor-optimal stable mechanism. For all $ \succsim_D $, all $ d\in D $ and all $ x\in X $ with $ d(x)=d $ such that $ x\succ_d \emptyset $:
	
	1. If $ x\succ_d \psi_d(\succsim_D)$, we prove that $ \psi_d(\succsim^x_d,\succsim_{-d})=\emptyset $, where $ \succsim^x_d $ is the truncation of $ \succsim_d $ at $ x $. Suppose $ \psi_d(\succsim^x_d,\succsim_{-d})\neq\emptyset $. Since $ \psi $ is individually rational, $ \psi_d(\succsim^x_d,\succsim_{-d})\succsim_d x\succ_d \psi_d(\succsim_D) $. Since $ \psi(\succsim^x_d,\succsim_{-d}) $ is stable under $ (\succsim^x_d,\succsim_{-d}) $ and $ \psi_d(\succsim^x_d,\succsim_{-d})\succsim_d x $, $ \psi(\succsim^x_d,\succsim_{-d}) $ must be stable under $ \succsim_D $. But this contradicts the doctor-optimal stability of $ \psi(\succsim_D) $ under $ \succsim_D $ because $ \psi(\succsim_D) $ does not Pareto dominate $ \psi(\succsim^x_d,\succsim_{-d}) $.
	
	2. If $ \psi_d(\succsim_D)\succsim_d x $, we prove that $ \psi_d(\succsim^{x}_d,\succsim_{-d})\succsim^{x}_d x$. Since $ \psi(\succsim_D) $ is stable under $ \succsim_D $ and $ \psi_d(\succsim_D)\succsim_d x $, it must be stable under $ (\succsim^{x}_d,\succsim_{-d}) $. By the doctor-optimal stability of $ \psi(\succsim^{x}_d,\succsim_{-d}) $ under $ (\succsim^{x}_d,\succsim_{-d}) $, $ \psi_d(\succsim^{x}_d,\succsim_{-d})\succsim^{x}_d \psi_d(\succsim_D)$. So $ \psi_d(\succsim^{x}_d,\succsim_{-d})\succsim^{x}_d x$.
\end{proof}

Lemma \ref{lemma:matching:upper-equivalent} and Lemma \ref{lemma:matching:truncation-invariance} together prove the first part of Proposition \ref{prop:HK}. 

Finally, we prove that non-wasteful matchings called by HK are exactly not-strongly-bidominated matchings within the set of individually rational matchings. 

\begin{lemma}\label{lemma:matching:nonwasteful}
	A matching $ X' $ is non-wasteful as called by \cite{hirata2017stable} if and only if $ X' $ is not-strongly-bidominated among individually rational matchings.
\end{lemma}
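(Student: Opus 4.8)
The plan is to unfold both definitions — HK's \emph{non-wastefulness} and the \emph{not-strongly-bidominated} property (relativized to the set of individually rational matchings) — and show each implication directly. Recall that here the ``allocations'' are individually rational matchings, the participants in a matching $X'$ are the matched doctors $\{d : x(d,X')\neq\emptyset\}$, doctors' preferences are strict, and hospitals are not agents in our abstract model (so welfare comparisons are only over doctors). With strict doctor preferences, for two individually rational matchings $X',X''$, ``$X''$ weakly Pareto dominates $X'$'' means $x(d,X'')\succsim_d x(d,X')$ for every $d$, and strict size dominance means every matched doctor in $X'$ stays matched in $X''$ and at least one more doctor becomes matched. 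The key elementary observation to record first is that if $X''$ weakly Pareto dominates an individually rational $X'$ and strictly participation-expands it (more matched doctors), then — since preferences are strict and each newly matched doctor $d$ has $x(d,X'')\succ_d\emptyset = x(d,X')$ — in fact $X''$ \emph{strictly} Pareto dominates $X'$; combined with strict size dominance this gives exactly strong bidomination in the degenerate (deterministic) sense. So ``not-strongly-bidominated among IR matchings'' reduces to: there is no IR matching $X''$ that weakly Pareto dominates $X'$ and has strictly more matched doctors.

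For the forward direction, suppose $X'$ is \emph{not} not-strongly-bidominated among IR matchings; then by the observation above there is an IR matching $X''$ weakly Pareto dominating $X'$ with strictly more matched doctors. I would argue $X'$ is wasteful in HK's sense, i.e., exhibit an IR matching $X'''$ with $X'\subsetneq X'''$. The natural candidate is to take the doctors matched in $X'$ to keep their $X'$-contract and let the doctors matched only in $X''$ adopt their $X''$-contract. The point requiring care is that this union of contracts is still IR on the hospital side: for each hospital $h$ we need $C_h$ applied to the relevant contract set to return exactly that set. Here I would use that weak Pareto dominance plus strictness forces the doctors matched in both $X'$ and $X''$ to hold the \emph{same} contract in the two matchings (a doctor weakly better off under $X''$ who is matched in $X'$ cannot switch to a strictly worse contract, and cannot switch to a strictly better one either without contradicting stability-type reasoning — actually one only needs that $x(d,X')\succsim_d\emptyset$ and strictness rules out a worse switch, and if $x(d,X'')\succ_d x(d,X')$ that is fine, but then we should take the $X''$-contract for $d$; let me instead simply build $X'''$ to agree with $X''$ on all doctors matched under $X''$). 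So set $X''':=X''$ if every doctor matched in $X'$ is matched in $X''$, which holds because weak Pareto dominance plus strictness plus individual rationality of $X'$ means $x(d,X')\succsim_d\emptyset$ implies $x(d,X'')\succsim_d x(d,X')\succsim_d\emptyset$, hence $d$ is matched in $X''$. Thus $X'$'s matched set is contained in $X''$'s, and strictly so; to get the literal containment $X'\subsetneq X'''$ of contract sets one may need to additionally keep $X'$'s contracts for the overlap doctors whose contract changed — I will handle this by the same hospital-side $C_h$ check. Individual rationality of $X'''$ on the hospital side is where IRC enters: starting from $C_h(X'')=x(h,X'')$ and adding the leftover $X'$-contracts for overlap doctors, IRC lets us peel them back to conclude the chosen set is $C_h$-stable.

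For the converse, suppose $X'$ is wasteful: there is an IR matching $X'''$ with $X'\subsetneq X'''$. Then $X'''$ matches every doctor matched in $X'$ (it contains $X'$'s contracts) and strictly more (the extra contracts in $X'''\setminus X'$ involve doctors unmatched in $X'$, since $X'''$ is a matching and already contains $X'$'s contract for each doctor matched in $X'$). Moreover $X'''$ weakly Pareto dominates $X'$: doctors matched in $X'$ keep their exact contract, newly matched doctors go from $\emptyset$ to an acceptable contract (individual rationality of $X'''$), so by strictness this is strict Pareto dominance for them and hence $X'''$ strongly bidominates $X'$ among IR matchings. Therefore $X'$ is strongly bidominated, i.e., not ``not-strongly-bidominated.''

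\textbf{Main obstacle.} The delicate point is entirely on the hospital side: showing that the matchings I construct (the union of selected contracts) are genuinely individually rational, i.e., that each hospital's choice function returns exactly the contracts assigned to it. This is where IRC must be invoked carefully — HK's Lemma~\ref{lemma:matching:upper-equivalent} proof shows the flavour of the argument (adding contracts one at a time and using IRC to collapse $C_h$), and I expect the cleanest route is to piggyback on that style rather than reprove it from scratch. Everything else (translating weak vs.\ strong bidomination, using strictness to upgrade weak to strict Pareto dominance, counting matched doctors) is routine.
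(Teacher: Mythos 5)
There is a genuine gap, and it sits in your ``key elementary observation.'' You claim that for deterministic matchings, weak Pareto dominance plus strict participation expansion ``gives exactly strong bidomination in the degenerate sense.'' It does not: it gives \emph{bidomination}. Strong bidomination has an additional clause --- for every agent $i$, $p'\succ^{sd}_i p \Rightarrow p'[\mathcal{A}_i]>p[\mathcal{A}_i]$ --- which in the deterministic case says that \emph{every} doctor who is strictly better off in $X''$ must have been \emph{unmatched} in $X'$. A doctor matched in both $X'$ and $X''$ who switches to a strictly better contract violates this clause. By erasing that clause you reduce ``not-strongly-bidominated'' to the wrong condition, and your forward direction then becomes unprovable: from an IR matching $X''$ that merely weakly Pareto dominates $X'$ and matches more doctors, one cannot in general build an IR matching $X'''\supsetneq X'$. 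Concretely, let $X'=\{x_1\}$ with $d(x_1)=d_1$, $h(x_1)=h$, and let $x_1'$ (same doctor, same hospital, $x_1'\succ_{d_1}x_1$) and $x_2$ (doctor $d_2$, hospital $h$) be such that $C_h(\{x_1',x_2\})=\{x_1',x_2\}$ but $C_h(\{x_1,x_2\})=\{x_1\}$. Then $X''=\{x_1',x_2\}$ is IR, weakly Pareto dominates $X'$, and matches strictly more doctors, yet no IR matching properly contains $X'$; so $X'$ is non-wasteful and (consistently with the lemma) not strongly bidominated, even though it is bidominated. This also shows the hospital-side IRC surgery you flag as the ``main obstacle'' cannot be made to work --- the obstacle is real, but it is a symptom of having discarded the hypothesis you actually need.

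The missing idea is that the extra clause of strong bidomination does all the work and makes the lemma essentially definitional. If $X''$ strongly bidominates $X'$, then any $d$ with $x(d,X')\neq\emptyset$ cannot satisfy $x(d,X'')\succ_d x(d,X')$ (that would force $x(d,X')=\emptyset$), so by strictness $x(d,X'')=x(d,X')$; hence $X'\subseteq X''$, and strict Pareto dominance supplies a newly matched doctor, giving $X'\subsetneq X''$. Thus $X''$ \emph{itself} witnesses wastefulness --- no construction, no appeal to IRC, no choice-function argument on the hospital side. The paper's proof is exactly this two-line equivalence ``$X'$ is strongly bidominated by $X''$ iff $X'\subsetneq X''$,'' run in both directions. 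Your converse direction ($X'\subsetneq X'''$ with $X'''$ IR implies $X'''$ strongly bidominates $X'$) is correct and matches the paper; only the forward direction needs to be redone along the lines above.
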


\begin{proof}
	For any two individually rational matchings $ X' $ and $ X'' $, we prove that $ X' $ is strongly bidominated by $ X'' $ if and only if $ X'\subsetneq X''  $. $ X' $ is strongly bidominated by $ X'' $ if $ X'' $ strictly Pareto dominates $ X' $ and for all $ d\in D $, $ x(d,X'')\succ_dx(d,X') $ implies $ x(d,X') =\emptyset $. So for all $ d\in D $ with $ x(d,X')\neq \emptyset $, $ x(d,X'')=x(d,X') $. It means that $ X'\subsetneq X'' $.
	
	 Conversely, if $ X'\subsetneq X'' $, for all $ d\in D $ with $ x(d,X')\neq \emptyset $, $ x(d,X'')=x(d,X') $ since each doctor signs at most one contract. For all $ d\in D $ with $ x(d,X')=\emptyset \neq x(d,X'') $, individual rationality implies $ x(d,X'')\succ_d x(d,X') $. So $ X'' $ strongly bidominates $ X' $.
\end{proof}

With Lemma \ref{lemma:matching:nonwasteful}, Corollary \ref{corollary:thm2:2} directly implies the second part of Proposition \ref{prop:HK}.

\bigskip

The \textbf{second} matching model we consider is the teacher reassignment model proposed by \cite{combe2020design}. In the model every teacher is initially matched to a school. The question is how to rematch teachers and schools to improve their welfare, given they have strict preferences over each other. Combe et al. propose a class of  mechanisms called \textit{Block Exchange} (BE) to find the set of desirable matchings. They recommend a subclass of BE called \textit{Teacher Optimal BE} (TO-BE). Every TO-BE is strategy-proof for teachers.  In the one-to-one environment, TO-BE is single-valued, and Combe et al. prove that TO-BE is the only strategy-proof mechanism in the class of BE.

\begin{proposition}[\cite{combe2020design}]\label{prop:combe}
	In the one-to-one environment of the teacher reassignment model, TO-BE is the unique strategy-proof mechanism in the class of BE.
\end{proposition}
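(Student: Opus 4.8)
The plan is to deduce Proposition~\ref{prop:combe} from Theorem~\ref{thm2} exactly as Proposition~\ref{prop:HK} was deduced, by showing that in the one-to-one teacher reassignment model (i) every member of the class BE is truncation-invariant (or at least that any two of them that we wish to compare are), and (ii) whenever two BE mechanisms produce matchings that are not welfare-equivalent for some profile, those matchings fail to be upper-equivalent. Since preferences of teachers are strict, ``welfare-equivalent'' collapses to ``equal,'' and by Theorem~\ref{thm2} (applied with strategy-proofness replaced by truncation-invariance, as the paper notes is legitimate) this forces any two strategy-proof BE mechanisms to coincide; combined with the fact that TO-BE is itself a strategy-proof member of BE and is single-valued in the one-to-one case, this yields that TO-BE is the unique strategy-proof mechanism in BE. First I would recall the algorithmic definition of Block Exchange: teachers are processed so that matchings are built from acceptable ``blocks'' of mutually improving exchanges, and the matching assigned is determined by the part of each teacher's preference above her initial assignment (her outside option here is her initial school). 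From this structure I would argue, as in the Boston-mechanism and PS discussion of Section~\ref{section:truncation-invariant}, that truncating a teacher's preference at a contract $x$ she weakly prefers to her BE-assignment leaves the whole matching unchanged, while truncating below her BE-assignment makes her unmatched (reassigned to her initial school); this is precisely truncation-invariance in the sense of the paper.

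The second and more substantial step is the upper-equivalence lemma: for any two BE matchings $\mu,\mu'$ at a profile $\succsim$, if $\mu\neq\mu'$ then there is a teacher $t$ with $\emptyset\neq\mu(t)\neq\mu'(t)\neq\emptyset$, where $\emptyset$ denotes being matched to one's initial school (the outside option). I would prove this by the same kind of blocking/consistency argument used in Lemma~\ref{lemma:matching:upper-equivalent}: if no such $t$ existed, then the symmetric difference $\mu\triangle\mu'$ would consist only of teachers who are matched (to a strictly preferred school) in one of the two matchings and at their initial school in the other; chasing the implied chain of vacated and filled seats, together with the defining property that BE matchings admit no further acceptable block, would produce a block of one matching by the other or a violation of the block structure, a contradiction. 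The key point is that membership in BE already encodes a ``no further beneficial exchange'' condition analogous to stability, which is what makes an upper-equivalence statement of this form provable; I would lean on whatever characterization of BE matchings via blocks Combe et al.\ give to make this chase rigorous.

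Once (i) and (ii) are in hand, the proof is a one-line invocation: if $\psi,\psi'$ are strategy-proof members of BE, then by (i) they are truncation-invariant, by (ii) any profile where $\psi(\succsim)\neq\psi'(\succsim)$ is a profile where $\psi(\succsim)$ and $\psi'(\succsim)$ are not upper-equivalent, hence by Theorem~\ref{thm2} (in its truncation-invariant form, with strict preferences so that welfare-equivalence is equality) $\psi=\psi'$; since TO-BE is strategy-proof and single-valued and lies in BE, it is the unique such mechanism. I expect the main obstacle to be step (ii): verifying the upper-equivalence dichotomy requires a careful combinatorial analysis of how two distinct BE matchings can differ, and in particular ruling out the case where they differ only by ``swapping'' teachers among schools without anyone returning to her initial assignment — this is exactly the case that upper-equivalence must exclude, and it will need the no-acceptable-block property of BE (and possibly the one-to-one restriction) used in an essential way, paralleling the IRC-driven argument in Lemma~\ref{lemma:matching:upper-equivalent}. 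A secondary, more routine obstacle is pinning down the precise definition of BE from Combe et al.\ so that the truncation-invariance argument in step (i) is airtight rather than merely analogous to the BM/PS heuristic.
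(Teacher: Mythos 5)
Your proposal follows essentially the same route as the paper: reduce Proposition~\ref{prop:combe} to Theorem~\ref{thm2} via an upper-equivalence dichotomy for BE matchings (your step (ii)), which with strict teacher preferences and the initial assignment as outside option forces any strategy-proof selection of BE to coincide with TO-BE. Two remarks: the paper does not prove that dichotomy either --- it imports it directly from \cite{combe2020design} as Lemma~\ref{lemma:combe}, stated only for TO-BE versus an arbitrary BE selection (which suffices, since one side of every relevant comparison can be taken to be TO-BE) --- and your step (i) on truncation-invariance of all BE members is unnecessary, because both mechanisms being compared are assumed strategy-proof, so Theorem~\ref{thm2} applies in its stated form without invoking the weaker truncation-invariant variant.
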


We will not go into details about the model and the BE mechanisms. Briefly, the one-to-one environment includes equal numbers of teachers $ T $ and schools $ S $. Every $ i\in S\cup T $ has strict preferences $ \succsim_i $ over agents on the other side. A matching is a one-to-one mapping $ \mu:T\rightarrow S $. Let $ \mu_0 $ denote the initial matching. This environment can be accommodated by our abstract model by letting every teacher's outside option be his initial assignment. BE is a generalization of TTC by taking preferences of both sides into account. Combe et al. prove several lemmas to obtain Proposition \ref{prop:combe}. One of their lemmas, called Lemma \ref{lemma:combe} below, proves that if any BE and TO-BE find different matchings for a preference profile of teachers,   the matchings are not upper-equivalent (in our terminology). Then Lemma \ref{thm2} directly implies Proposition \ref{prop:combe}. Combe et al. need other lemmas and use a logic different from Lemma \ref{thm2} to obtain Proposition \ref{prop:combe}.

\begin{lemma}[\cite{combe2020design}]\label{lemma:combe}
	In the one-to-one environment of the teacher reassignment model, let $ \psi $ be any selection of BE. For any preference profile of teachers $ \succsim_{T} $, if $ x =$  TO-BE$ (\succsim_T) \neq \psi(\succsim_T)=y $, then there exists $ t\in T $ such that $x(t)\succ_t y(t)\succ_t \mu_0(t) $.
\end{lemma}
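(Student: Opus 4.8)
\emph{Reduction.} Recall that, as noted right after Theorem~\ref{thm2}, for two deterministic mechanisms it is enough to produce a teacher $t$ and a school $s\in\a_t(\gg_t,\o_t)$ (here $\o_t=\mu_0(t)$, so $s$ being in $\a_t(\gg_t,\o_t)$ means $s\succ_t\mu_0(t)$ and $s$ is not $t$'s least-preferred school that is strictly better than $\mu_0(t)$) with $x(t)\succsim_t s\succ_t y(t)$ or $y(t)\succsim_t s\succ_t x(t)$. The conclusion $x(t)\succ_t y(t)\succ_t\mu_0(t)$ is exactly such a witness with $s=x(t)$: since $y(t)\succ_t\mu_0(t)$, the school $x(t)$ is not $t$'s worst strictly acceptable school, hence $x(t)\in\a_t(\gg_t,\o_t)$, and $x(t)\succsim_t x(t)\succ_t y(t)$. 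So the plan is simply to produce a teacher $t$ with $x(t)\succ_t y(t)\succ_t\mu_0(t)$. Throughout, $x=\text{TO-BE}(\succsim_T)$ and $y=\psi(\succsim_T)$ are BE matchings, hence individually rational relative to $\mu_0$: $x(t)\succsim_t\mu_0(t)$ and $y(t)\succsim_t\mu_0(t)$ for every $t$, each with equality iff the teacher keeps his initial school.

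\emph{Teacher-optimality and the contradiction hypothesis.} I would use as structural input that TO-BE is teacher-optimal among BE outcomes, i.e.\ $x(t)\succsim_t y(t)$ for every $t$ (established by Combe et al.). Since $x\neq y$ and preferences are strict, the set $T^{*}\coloneqq\{t\in T: x(t)\succ_t y(t)\}$ is nonempty. By individual rationality, $y(t)\succsim_t\mu_0(t)$ for $t\in T^{*}$; if some $t_0\in T^{*}$ had $y(t_0)\succ_{t_0}\mu_0(t_0)$ we would be done. So suppose, toward a contradiction, that $y(t)=\mu_0(t)$ for every $t\in T^{*}$.

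\emph{An improving teacher cycle.} Define $\sigma\colon T^{*}\to T$ by $\sigma(t)=y^{-1}(x(t))$ (the teacher holding $t$'s $x$-school under $y$). For $t\in T^{*}$, $x(t)\neq y(t)$ so $\sigma(t)\neq t$; and $x(\sigma(t))\succsim_{\sigma(t)}y(\sigma(t))=x(t)$ with $x$ injective, so $x(\sigma(t))\succ_{\sigma(t)}y(\sigma(t))$, i.e.\ $\sigma(t)\in T^{*}$. Hence $\sigma$ is a fixed-point-free permutation of the finite set $T^{*}$, which therefore splits into $\sigma$-cycles of length at least two. Fix one, $(t_1,\dots,t_k)$ with $y(t_{j+1})=x(t_j)$ for all $j$ (indices mod $k$). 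Since $t_{j+1}\in T^{*}$, the standing assumption gives $\mu_0(t_{j+1})=y(t_{j+1})=x(t_j)$, so on this cycle $x$ merely rotates the teachers' initial schools and each $t_j$ strictly prefers $x(t_j)=\mu_0(t_{j+1})$ to $\mu_0(t_j)=y(t_j)$. Performing all such rotations simultaneously turns $y$ into a matching $z$ that is individually rational, strictly Pareto-dominates $y$ for teachers, and agrees with $x$ on the union of these cycles.

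\emph{Conclusion, and the main obstacle.} It remains to argue that the existence of $z$ contradicts $y$ being a BE outcome: $z$ is obtained from $y$ by coalitions of teachers trading around cycles the very schools they hold under $y$, all of them becoming strictly better off. The delicate step — which I expect to be the crux — is verifying that this is a \emph{legitimate} block in the sense of Combe et al., i.e.\ that the schools on these cycles have no admissible objection; here one invokes the precise blocking-proofness axiom characterizing BE outcomes together with the fact that $z$ reproduces, on exactly these teacher--school pairs, the assignment of the BE outcome $x$, so no school can object to a rotation that the BE outcome $x$ itself carries out. Granting this, $y$ admits a block, contradicting blocking-proofness of BE outcomes; hence the standing assumption fails and some $t\in T^{*}$ satisfies $x(t)\succ_t y(t)\succ_t\mu_0(t)$, as required. (An alternative route, avoiding the appeal to teacher-optimality, would run the BE/TTC procedures producing $x$ and $y$ in parallel and compare the first round at which they diverge, at the cost of carrying along the bookkeeping of removed teachers and schools.)
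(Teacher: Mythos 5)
The paper does not actually prove this lemma: it explicitly states ``For brevity, we do not present the proof of Lemma~\ref{lemma:combe}'' and attributes both the statement and its proof to \cite{combe2020design}. So there is no in-paper argument to compare yours against, and your attempt has to be judged on its own.

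Your skeleton is sensible. The reduction paragraph correctly identifies that the stated conclusion is exactly the upper-equivalence witness needed for Theorem~\ref{thm2} (with $b=y(t)$ certifying $x(t)\in\a_t(\gg_t,\o_t)$), and the cycle construction is clean: $\sigma=y^{-1}\circ x$ restricted to $T^{*}$ is an injective, fixed-point-free self-map of a finite set, hence a permutation, and under the contradiction hypothesis its cycles rotate initial schools while making every teacher on the cycle strictly better off. However, the proof as written has a genuine gap precisely where you flag it. First, you import teacher-optimality of TO-BE over \emph{every} BE selection ($x(t)\succsim_t y(t)$ for all $t$) as a black box; this is plausible given the name but is itself a nontrivial structural result of Combe et al.\ that your argument depends on entirely. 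Second, and more seriously, the step ``granting this, $y$ admits a block'' is not established. To show the rotation is an admissible block you must verify the school side of the blocking condition: each school $\mu_0(t_{j+1})$ must (at least weakly) prefer the incoming teacher $t_j$ to its current teacher under $y$, which by your standing assumption is its initial teacher $t_{j+1}$. The argument that closes this is that $x$ is a BE matching, hence two-sided individually rational relative to $\mu_0$, and $x$ matches $t_j$ to $\mu_0(t_{j+1})$, so that school weakly prefers $t_j$ to $t_{j+1}$; your appeal to ``$x$ itself carries out this rotation'' gestures at this but does not state the school-IR property of BE outcomes that makes it go through, nor check it against the exact blocking definition in Combe et al. Until that is pinned down, the contradiction with $y$ being unblocked is asserted rather than proved.
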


\subsection{Strategy-proofness and single-valued core}\label{section:sonmez_model}

In this subsection we show the logic in Lemma \ref{thm2} also lies behind the main theorems of \cite{sonmez1999strategy} and \cite{ehlers2018strategy}. In a generalized object assignment model, S\"{o}nmez proves that if there exists an IR, Pareto efficient and strategy-proof mechanism, then when the core is nonempty, the core is essentially single-valued and the mechanism selects a core allocation.  \cite{ehlers2018strategy} extends S\"{o}nmez's result to the individually-rational-core (IR-core), which is a superset of the core. If we use $ C_{ir}(\succsim_I) $ to denote the IR-core in every preference profile $ \succsim_I $, then Ehlers' theorem is stated as follows.

\begin{proposition}[\cite{sonmez1999strategy,ehlers2018strategy}]\label{prop:singlevalued:core}
	In S\"{o}nmez's model, if there exists an IR, Pareto efficient and strategy-proof deterministic mechanism $ \psi $, then: 
	\begin{enumerate}
		\item for all $ \succsim_I $ and all $ a,b\in C_{ir}(\succsim_I) $, $ a\sim_i b $ for all $ i\in I $;
		
		\item for all $ \succsim_I $ with $ C_{ir}(\succsim_I)\neq \emptyset $, $ \psi(\succsim_I)\in C_{ir}(\succsim_I) $.
	\end{enumerate}
\end{proposition}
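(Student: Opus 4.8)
The plan is to reduce both statements to Theorem \ref{thm2} by constructing, from the hypothetical IR, Pareto efficient and strategy-proof mechanism $\psi$, a second strategy-proof mechanism that is forced to disagree with $\psi$ whenever the IR-core misbehaves, and then deriving a contradiction unless the IR-core is essentially single-valued and $\psi$ selects from it. First I would recall the relevant structure of S\"{o}nmez's model: it is an instance of the abstract allocation model, agents have strict preferences (or at least NI and Richness hold, so that welfare-equivalence coincides with coincidence on the relevant contour sets), and the IR-core $C_{ir}(\succsim_I)$ is, by definition, the set of IR allocations that are not blocked by any coalition via an IR-feasible allocation. The key classical fact I would invoke is that any IR and Pareto efficient mechanism must, at every profile with $C_{ir}(\succsim_I)\neq\emptyset$, return an allocation that is \emph{welfare-equivalent} to some element of $C_{ir}(\succsim_I)$ is \emph{not} automatic — that is precisely (2) — so the real work is to show the two parts are intertwined and both follow from an upper-equivalence dichotomy.

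The core step is to prove the analogue of Lemmas \ref{lemma:matching:upper-equivalent} and \ref{lemma:combe} in this setting: \textbf{if $\psi(\succsim_I)$ and some IR-core allocation $c$ are not welfare-equivalent, then they are not upper-equivalent}, i.e.\ there exist $i$ and $a\in\a_i(\gg_i,\o_i)$ with $\psi(\succsim_I)[\a_i(\succsim_i,a)]\neq c[\a_i(\succsim_i,a)]$. Since the mechanisms here are deterministic, this amounts to finding an agent $i$ and an allocation $a$ strictly above $\o_i$, but not immediately above it, with $\psi(\succsim_I)\succsim_i a\succ_i c$ or $c\succsim_i a\succ_i \psi(\succsim_I)$. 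I would argue this using the blocking/Pareto structure: because $\psi(\succsim_I)$ is Pareto efficient and $c$ is in the IR-core, neither Pareto dominates the other, so if they are not welfare-equivalent there is an agent who strictly prefers one to the other; one then has to promote that disagreement to a witness allocation $a$ lying in $\a_i(\gg_i,\o_i)$ rather than in the bottom indifference class, using that the disagreement allocation is strictly acceptable and, if necessary, comparing to a third allocation in the core or in $\psi$'s image. Having established this dichotomy, I would form the mechanism $\psi'$ that coincides with $\psi$ except that at profiles where $C_{ir}\neq\emptyset$ and $\psi(\succsim_I)\notin C_{ir}(\succsim_I)$ it instead returns a selected core element; the dichotomy says $\psi$ and $\psi'$ are never non-welfare-equivalent without being non-upper-equivalent, so Theorem \ref{thm2} forces them to be welfare-equivalent, which (with strict preferences) forces $\psi(\succsim_I)\in C_{ir}(\succsim_I)$ whenever $C_{ir}(\succsim_I)\neq\emptyset$, giving (2). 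For (1), if two distinct-in-welfare allocations $a,b$ both lay in $C_{ir}(\succsim_I)$, the same dichotomy applied to the pair $(a,b)$ together with an exchange argument — perturbing the profile of agents who distinguish $a$ and $b$ and tracking $\psi$ via strategy-proofness and contraction-invariance — would contradict part (2) at the perturbed profile, so no such pair exists.

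The main obstacle I anticipate is the upper-equivalence dichotomy itself, specifically guaranteeing that the witness allocation $a$ can be chosen in $\a_i(\gg_i,\o_i)$ rather than in the worst strictly-acceptable indifference class. Under strict preferences $\a_i(\gg_i,\o_i)$ is just $\a_i$ minus the single allocation ranked immediately above $\o_i$, so the bad case is exactly when the only disagreement between $\psi(\succsim_I)$ and $c$ is at that one allocation; ruling this out requires using the richness of S\"{o}nmez's domain (enough agents and objects, or enough coalitional structure) to exhibit a third allocation separating them, or to argue that a disagreement confined to the immediately-acceptable level is incompatible with one side being Pareto efficient and the other being in the IR-core. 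Secondary bookkeeping obstacles are verifying that S\"{o}nmez's and Ehlers's domains satisfy Richness in the precise sense of the paper (so that contractions and hence Theorem \ref{thm2} apply) and handling the coalitional blocking definitions of the IR-core cleanly; these I expect to be routine given Ehlers's framework, but they must be checked.
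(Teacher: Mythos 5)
Your high-level plan — build a competing mechanism out of the IR-core, prove an ``upper-equivalence dichotomy,'' and then run a Theorem~\ref{thm2}-style contraction iteration — is the same skeleton as the paper's proof, but two of the steps you defer are exactly where the content lies, and as stated your construction does not support the iteration. First, your $\psi'$ (``coincides with $\psi$ except that at profiles where $C_{ir}\neq\emptyset$ and $\psi(\succsim_I)\notin C_{ir}(\succsim_I)$ it returns a selected core element'') is not strategy-proof and is not shown to be contraction-invariant, so Theorem~\ref{thm2} cannot be invoked as a black box: after an agent contracts at the witness allocation, your ``selected core element'' at the new profile may be a different allocation, or $\psi$ may re-enter the core there and $\psi'$ reverts to $\psi$, and the disagreement you need to propagate is lost. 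The paper avoids this by fixing \emph{one} profile $\succsim^*_I$ and \emph{one} core element $a\in C_{ir}(\succsim^*_I)$, defining $\psi'$ to equal $a$ on the entire set of profiles reachable by iterated contractions \emph{at $a$}, and equal to $\psi$ elsewhere; Lemma~\ref{lemma3} (the IR-core is closed under contractions at its own elements, via Assumption~B) guarantees $a$ stays in the IR-core along every such path, so $\psi'$ is IR and Pareto efficient and, by construction, behaves contraction-invariantly exactly where the iteration needs it. The proof then \emph{replays} the argument of Theorem~\ref{thm2} rather than citing it, using contraction-invariance of $\psi$ (from strategy-proofness) on one side and the by-fiat constancy of $\psi'$ on the other.

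Second, the dichotomy you flag as the ``main obstacle'' is not resolved by domain richness or a third separating allocation; it follows from the blocking structure plus Assumption~A. In the paper's Lemma~\ref{lemma1}: since $\psi(\succsim_I)$ and the core element $\psi'(\succsim_I)$ are both IR and Pareto efficient, neither Pareto dominates the other, so the set $I'$ of agents strictly preferring $\psi'(\succsim_I)$ is nonempty and so is its complement's strict-preference set. If every $k\in I'$ had $\psi(\succsim_I)\sim_k\omega$, Assumption~A would force $\psi_k(\succsim_I)=\omega(k)$, and then $\psi(\succsim_I)$ would weakly dominate $\psi'(\succsim_I)$ via the coalition $I\setminus I'$ with everyone outside it at their endowment — contradicting $\psi'(\succsim_I)\in C_{ir}(\succsim_I)$. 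Hence some $k\in I'$ has $\psi'(\succsim_I)\succ_k\psi(\succsim_I)\succ_k\omega$, which is precisely the witness placing the core element in $\a_k(\gg_k,\o_k)$, and crucially the contraction point is the core element itself (so Lemma~\ref{lemma3} applies at the next step). Finally, for part (1) you do not need a separate exchange/perturbation argument: since the construction can be run for \emph{every} $a\in C_{ir}(\succsim^*_I)$ and each resulting $\psi'$ is forced to coincide with $\psi$, all IR-core allocations at $\succsim^*_I$ are welfare-equivalent to $\psi(\succsim^*_I)$ and hence to one another, which yields (1) and (2) simultaneously.
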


S\"{o}nmez's model is a special case of ours. Specifically, in S\"{o}nmez's model every agent $ i $ is endowed with a finite set of indivisible objects $ \w(i) $, which we interpret as $ i $'s outside option. A deterministic allocation $ a $ assigns every object to an agent. We use $ a(i) $ to denote the set of objects assigned to $ i $. S\"{o}nmez makes two assumptions on agents' preference domains, which are stronger than our NI and Richness:
\begin{itemize}
	\item Assumption A: for all $ \succsim_i\in \mathcal{R}_i $ and all $ a\in \mathcal{A}$, $ a \sim_i \w \Leftrightarrow a(i)=\w(i)$.
	
	\item Assumption B: for all $ \succsim_i\in \mathcal{R}_i $ and all $  a\in \mathcal{A}(\succ_i,\w)
	$, there exists $ \succsim'_i\in \mathcal{R}_i $ such that $ \mathcal{A}(\succsim_i,a)= \mathcal{A}(\succ'_i,\w)=\mathcal{A}(\succsim'_i,a) $ and $\mathcal{A}(\succ_i, a)=\mathcal{A}(\succ'_i, a) $.
\end{itemize}

The two core notions are defined as follows. For all $ T\subset I $, define $ \w(T)\coloneqq\cup_{i\in T}\w(i) $. An allocation $ a $ is said to weakly dominate an allocation $ b $ via the coalition $ T\subset I $ under $ \succsim_I $, denoted by $ a \ wdom_T \ b $, if  $ a(i)\subset \w(T) $ for all $ i\in T $, $ a\succsim_i b $ for all $ i\in T $, and $ a\succ_j b $ for some $ j\in T $. The \textit{core} under $ \succsim_I $ consists of all undominated allocations. The \textit{IR-core} under $ \succsim_I $ is $ C_{ir}(\succsim_I)\coloneqq\{b\in \mathcal{A}:\nexists T \textit{ and }a\in \mathcal{A} \text{ with }a \ wdom_T \ b\text{ and }a(i)=\w(i) \text{ for all }i\in I\backslash T\} $. 
When the IR-core is nonempty, every IR-core allocation is IR and Pareto efficient.

Our first observation is that, when the IR-core is nonempty, it has the contraction-invariance property in the sense of Lemma \ref{lemma3}.

\begin{lemma}\label{lemma3}
	For every $ \succsim_I\in \r$ with $ C_{ir}(\succsim_I)\neq \emptyset $, every $ a\in C_{ir}(\succsim_I) $, every $ i $, and every contraction $ \succsim'_i $ of $ \succsim_i $ at $ a $ that satisfies S\"{o}nmez's Assumption B, $ a\in C_{ir}(\succsim'_i,\succsim_{-i}) $.
\end{lemma}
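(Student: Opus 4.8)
The plan is to argue by contradiction: any coalition-and-allocation pair that would block $a$ from the IR-core under the contracted profile $(\succsim'_i,\succsim_{-i})$ is shown to also block $a$ under the original profile $\succsim_I$, contradicting $a\in C_{ir}(\succsim_I)$. The key observation is that whether $a$ lies in $C_{ir}(\cdot)$ depends on the preference profile only through the comparisons ``$c\succsim_j a$'' and ``$c\succ_j a$'' between a candidate dominating allocation $c$ and $a$ itself; the feasibility clauses of $wdom_T$, namely $c(j)\subset\w(T)$ for $j\in T$, together with the clause $c(j)=\w(j)$ for $j\notin T$ in the definition of the IR-core, involve no preferences, and the allocation set $\mathcal{A}$ is the same under both profiles.

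So first I would suppose $a\notin C_{ir}(\succsim'_i,\succsim_{-i})$, giving a coalition $T\subset I$ and an allocation $c\in\mathcal{A}$ with $c\ wdom_T\ a$ under $(\succsim'_i,\succsim_{-i})$ and $c(j)=\w(j)$ for all $j\in I\backslash T$; unpacking $wdom_T$ this means $c(j)\subset\w(T)$ for all $j\in T$, $c\succsim'_j a$ for all $j\in T$, and $c\succ'_{j_0} a$ for some $j_0\in T$ (with $\succsim'_j=\succsim_j$ for $j\ne i$). I then claim the same $T$ and $c$ witness $c\ wdom_T\ a$ under $\succsim_I$ with $c(j)=\w(j)$ for $j\notin T$. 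The feasibility clauses carry over verbatim. For the preference clauses, if $i\notin T$ there is nothing to check since $\succsim'_j=\succsim_j$ for all $j\in T$; if $i\in T$, I would use the two equalities that an Assumption~B contraction of $\succsim_i$ at $a$ satisfies, $\mathcal{A}(\succsim_i,a)=\mathcal{A}(\succsim'_i,a)$ (from the chain $\mathcal{A}(\succsim_i,a)=\mathcal{A}(\succ'_i,\w)=\mathcal{A}(\succsim'_i,a)$) and $\mathcal{A}(\succ_i,a)=\mathcal{A}(\succ'_i,a)$. From $c\succsim'_i a$, i.e.\ $c\in\mathcal{A}(\succsim'_i,a)=\mathcal{A}(\succsim_i,a)$, I get $c\succsim_i a$; and if $j_0=i$, from $c\succ'_i a$, i.e.\ $c\in\mathcal{A}(\succ'_i,a)=\mathcal{A}(\succ_i,a)$, I get $c\succ_i a$. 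Hence $c\succsim_j a$ for all $j\in T$ and $c\succ_{j_0}a$ for some $j_0\in T$, so $c\ wdom_T\ a$ under $\succsim_I$ and the side-condition holds, contradicting $a\in C_{ir}(\succsim_I)$. (If there is no Assumption~B contraction of $\succsim_i$ at $a$, the statement is vacuous for that $i$.)

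I do not expect a genuine obstacle: the only point needing care is to keep every use of the domination relation tied to comparisons with $a$ alone, which is exactly why invariance of both the upper contour set and the strict upper contour set of $a$ under the contraction suffices. Indeed, running the argument in reverse shows the converse implication $a\in C_{ir}(\succsim'_i,\succsim_{-i})\Rightarrow a\in C_{ir}(\succsim_I)$ as well, although only the direction stated in the lemma is needed afterwards (for the proof of Proposition~\ref{prop:singlevalued:core}).
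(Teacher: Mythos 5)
Your proof is correct and follows essentially the same route as the paper: take a witnessing pair $(T,c)$ for $a\notin C_{ir}(\succsim'_i,\succsim_{-i})$, note that the feasibility clauses are preference-free, and use the Assumption~B equalities $\mathcal{A}(\succsim_i,a)=\mathcal{A}(\succsim'_i,a)$ and $\mathcal{A}(\succ_i,a)=\mathcal{A}(\succ'_i,a)$ to transfer the comparisons with $a$ back to $\succsim_I$, contradicting $a\in C_{ir}(\succsim_I)$. Your explicit case split on whether $i\in T$ is a minor presentational difference from the paper's proof, which handles it implicitly.
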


\begin{proof}
	Suppose $ a\notin C_{ir}(\succsim'_i,\succsim_{-i}) $. Then under $ (\succsim'_i,\succsim_{-i}) $, there exist $ T\subset I $ and $ c\in \mathcal{A} $ such that $ c \ wdom_T \ a $ and $ c(k)=\w(k) $ for all $ k\in I\backslash T $. So $ c\succsim_j a $ for all $ j\in T\backslash \{i\} $, $ c\succsim'_i a $, and either $ c\succ_j a $ for some $ j\in T\backslash \{i\} $, or $ c\succ'_i a $. By Assumption B, $ c\succsim'_i a $ implies $ c\succsim_i a $ and $ c\succ'_i a $ implies $ c\succ_i a $. So under $ \succsim_I $, we must have $ c \ wdom_T \ a $ and $ c(k)=\w(k) $ for all $ k\in I\backslash T $. So $ a\notin C_{ir}(\succsim_I) $, which is a contradiction.
\end{proof}

Suppose there exists an IR, Pareto efficient and strategy-proof deterministic mechanism $ \psi $. If for all $ \succsim_I$, $ C_{ir}(\succsim_I)= \emptyset $, then Proposition \ref{prop:singlevalued:core} holds trivially. Otherwise, we define $ \mathcal{R}^*\coloneqq\{\succsim_I\in \mathcal{R}: C_{ir}(\succsim_I)\neq \emptyset\} $, which is nonempty. Arbitrarily choose $ \succsim^*_I\in  \mathcal{R}^*$ and $ a\in C_{ir}(\succsim^*_I) $. Then we construct a mechanism $ \psi' $ according to the following steps:
\begin{enumerate}
	\item Let $ \psi'(\succsim^*_I)=a $. So $ \psi'(\succsim^*_I)\in C_{ir}(\succsim^*_I) $.
	
	\item For every $ i $ such that $ a\succ^*_i c\succ^*_i \w $ for some $ c\in \a $, let $ \succsim'_i $ be any contraction of $ \succsim^*_i $ at $ a $ satisfying  Assumption B. Let $ \psi'(\succsim'_i,\succsim^*_{-i})=a $. By Lemma \ref{lemma3}, $ \psi'(\succsim'_i,\succsim^*_{-i})\in C_{ir}(\succsim'_i,\succsim^*_{-i}) $.
	
	\item For every $ \succsim_I $ considered in step 2 and every $ i$ such that $ a \succ_i c \succ_i \w $ for some $ c\in \a $, let $ \succsim'_i $ be any contraction of $ \succsim_i $ at $ a $ satisfying  Assumption B. Let $ \psi'(\succsim'_i,\succsim_{-i})=a $. By Lemma \ref{lemma3}, $ \psi'(\succsim'_i,\succsim_{-i})\in C_{ir}(\succsim'_i,\succsim_{-i}) $.
	
	\item Repeat the above step until we cannot find a new preference profile. Since in every step we contract a preference profile in the previous step, we must stop in finite steps. Denote by $ \mathcal{R}' $ the set of preference profiles (including $ \succsim^*_I $) considered in these steps. We know that for all $ \succsim_I\in \mathcal{R}' $, $ \psi'(\succsim_I)=a\in C_{ir}(\succsim_I) $.
	
	\item For all $ \succsim_I\in \mathcal{R}\backslash \mathcal{R}' $, let $ \psi'(\succsim_I)=\psi(\succsim_I) $. 
\end{enumerate}

So in every preference profile, $ \psi' $ either selects an IR-core allocation or selects the outcome of $ \psi $. Thus, $ \psi' $ is IR and Pareto efficient. Lemma \ref{lemma1} proves that if $ \psi' $ and $ \psi $ find different allocations for a preference profile, the allocations must not be upper-equivalent.

\begin{lemma}\label{lemma1}
	For every $ \succsim_I\in \r$ with $ \psi'(\succsim_I) \neq \psi(\succsim_I) $, there exists $ i$ such that $ \psi'(\succsim_I)\succ_i\psi(\succsim_I)\succ_i \w $.
\end{lemma}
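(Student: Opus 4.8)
The plan is to argue by contradiction, using the defining property of the IR-core. First I would reduce to the profiles that matter: since $\psi'$ coincides with $\psi$ off $\mathcal{R}'$ (Step 5), it suffices to treat $\succsim_I\in\mathcal{R}'$, and there Step 4 together with Lemma \ref{lemma3} gives that $a:=\psi'(\succsim_I)$ (the fixed allocation of the construction) lies in $C_{ir}(\succsim_I)$; in particular $a$ is individually rational and Pareto efficient. Write $q:=\psi(\succsim_I)$, which is individually rational and Pareto efficient because $\psi$ is. Assume toward a contradiction that no agent $i$ satisfies $a\succ_i q\succ_i\w$. Since $q\succsim_i\w$ for every $i$ by individual rationality, failure of $q\succ_i\w$ forces $q\sim_i\w$, hence $q(i)=\w(i)$ by Assumption A. So for every $i$ we have the dichotomy: either $q\succsim_i a$, or $q(i)=\w(i)$.

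Next I would isolate the coalition that actually moves. Let $T:=\{i\in I: q(i)\neq\w(i)\}$. For $i\notin T$ we have $q(i)=\w(i)$, and for $i\in T$ the dichotomy gives $q\succsim_i a$. A counting argument then shows $q$ only reshuffles $\w(T)$ inside $T$: because $\w$ partitions the whole object set and $q$ assigns each $i\notin T$ exactly $\w(i)$, the members of $T$ jointly receive the complementary objects, namely $\w(T)$, under $q$; since the bundles $q(i)$, $i\in T$, are disjoint, $q(i)\subseteq\w(T)$ for every $i\in T$. Thus $q$ meets the feasibility clause in the definition of weak domination via $T$, agrees with $\w$ outside $T$, and is weakly preferred to $a$ by every member of $T$.

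The case split is the heart of the proof. If $q\succ_j a$ for some $j\in T$, then $q\ wdom_T\ a$ under $\succsim_I$ with $q(i)=\w(i)$ for all $i\notin T$, contradicting $a\in C_{ir}(\succsim_I)$. Otherwise $q\sim_i a$ for every $i\in T$; combining this with $q\sim_i\w$ and $a\succsim_i\w$ for $i\notin T$ (individual rationality of $a$ plus Assumption A) shows that $a$ weakly Pareto dominates $q$, and since $q$ is Pareto efficient this forces $a\sim_i q$ for all $i$. The main obstacle is precisely this last branch: with weak preferences it does \emph{not} contradict $\psi'(\succsim_I)\neq\psi(\succsim_I)$, since distinct allocations can be welfare-equivalent (and both may sit in the IR-core). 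I would handle it by observing that what the argument genuinely delivers — and all that the downstream use needs — is the sharper statement ``if $\psi(\succsim_I)$ and $\psi'(\succsim_I)$ are not welfare-equivalent, then some $i$ has $a\succ_i q\succ_i\w$.'' Indeed $a\succ_i q\succ_i\w$ exhibits $a\in\a_i(\gg_i,\o_i)$ (with witness $q$) and $q\notin\a_i(\succsim_i,a)$, so $\psi'(\succsim_I)[\a_i(\succsim_i,a)]=1\neq 0=\psi(\succsim_I)[\a_i(\succsim_i,a)]$; hence the two allocations are not upper-equivalent, which is exactly the hypothesis consumed by Theorem \ref{thm2}.

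In short, the skeleton is: reduce to $\mathcal{R}'$; build the coalition $T$ of agents not at their endowment under $\psi(\succsim_I)$; use the partition structure of allocations to pin down $q(i)\subseteq\w(T)$ and $q=\w$ off $T$; then either get weak domination of $a$ by $q$ (contradicting the IR-core property of $a$) or get that $a$ weakly Pareto dominates $q$ (contradicting Pareto efficiency of $\psi$), leaving only the welfare-equivalent case, which one either rules out a priori — e.g. by letting the construction pick $a$ compatibly with $\psi$ at $\succsim^*_I$ whenever possible — or else absorbs, noting it is consistent with, and indeed feeds, part (1) of Proposition \ref{prop:singlevalued:core}.
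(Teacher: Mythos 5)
Your proof is correct in substance and follows essentially the same route as the paper's: exploit $a=\psi'(\succsim_I)\in C_{ir}(\succsim_I)$ to manufacture a weak-domination contradiction. The paper's version asserts at the outset that, both allocations being IR and Pareto efficient and distinct, there exist $i,j$ with $\psi(\succsim_I)\succ_i\psi'(\succsim_I)$ and $\psi'(\succsim_I)\succ_j\psi(\succsim_I)$; it then supposes every $k$ with $\psi'(\succsim_I)\succ_k\psi(\succsim_I)$ has $\psi(\succsim_I)\sim_k\w$, invokes Assumption A, and lets $\psi(\succsim_I)$ block $\psi'(\succsim_I)$ via the complementary coalition. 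Your coalition $T=\{i:q(i)\neq\w(i)\}$ is just the natural ``who moves'' set, and your feasibility check $q(i)\subseteq\w(T)$ makes explicit the partition argument the paper leaves implicit; the two blocking arguments are interchangeable. The genuinely valuable difference is your Case B. The paper's opening assertion tacitly assumes that two distinct IR and Pareto efficient allocations cannot be welfare-equivalent, which Assumption A does not rule out when an agent is indifferent between two non-endowment bundles; in that situation the lemma as literally stated can fail, and the paper's proof breaks at its first step. You isolate exactly this branch, show it forces $a\sim_i q$ for all $i$, and correctly observe that the statement the application actually consumes is the weakening ``$\psi(\succsim_I)$ and $\psi'(\succsim_I)$ not welfare-equivalent $\Rightarrow$ some $i$ has $a\succ_i q\succ_i\w$,'' which your argument does establish, which propagates through the contraction iteration in the proof of Proposition \ref{prop:singlevalued:core} (at the contracted profile the two outcomes are again not welfare-equivalent), and which matches the proposition's conclusion being stated only up to welfare-equivalence. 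So your proof is sound for the version of the lemma that is actually needed, and it exposes and repairs a real imprecision in the paper's own statement and argument.
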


\begin{proof}
	For every $ \succsim_I\in \mathcal{R} $ such that $ \psi'(\succsim_I) \neq \psi(\succsim_I) $, it must be that $ \psi'(\succsim_I)\in C_{ir}(\succsim_I) $. Since both $ \psi'(\succsim_I) $ and $ \psi(\succsim_I) $ are IR and Pareto efficient, there must exist distinct $ i,j $ such that $ \psi(\succsim_I)\succ_i \psi'(\succsim_I) $ and $ \psi'(\succsim_I)\succ_j \psi(\succsim_I) $. Suppose for all $ k\in I'\coloneqq\{i\in I: \psi'(\succsim_I)\succ_i \psi(\succsim_I)\} $, $ \psi(\succsim_I)\sim_k \w $. By  Assumption A, $ \psi_k(\succsim_I)= \w(k) $ for all $ k\in I' $. Then $ \psi(\succsim_I)$ can weakly block $ \psi'(\succsim_I) $ via  $ I\backslash I' $ under $ \succsim_I $, since all agents in $ I\backslash I' $ weakly prefer $ \psi(\succsim_I)$ to $ \psi'(\succsim_I)$, and $ i\in I\backslash I' $ strictly prefers $ \psi(\succsim_I)$ to $ \psi'(\succsim_I)$. But it contradicts the fact that $ \psi'(\succsim_I)\in C_{ir}(\succsim_I) $. So there must exist $ k\in I' $ such that $ \psi'(\succsim_I)\succ_k \psi(\succsim_I)\succ_k \w $.
\end{proof}

Now we are ready to use the logic in Lemma \ref{thm2} to prove Proposition \ref{prop:singlevalued:core}.
\begin{proof}[Proof of Proposition \ref{prop:singlevalued:core}]
	For every $ \succsim_I\in \mathcal{R} $ with $ \psi'(\succsim_I) \neq \psi(\succsim_I) $, $ \psi'(\succsim_I)=a\in C_{ir}(\succsim_I) $. By Lemma \ref{lemma1}, there exists $ i $ such that $ a\succ_i\psi(\succsim_I)\succ_i \w $. Let $ \succsim'_i $ be any contraction of $ \succsim_i $ at $ a $ satisfying Assumption B. So $ \psi'(\succsim'_i,\succsim_{-i})=a\succ'_i \w $. Strategy-proofness of $ \psi $ requires that $ \psi(\succsim'_i,\succsim_{-i})\sim'_i \w$. So $ \psi'(\succsim'_i,\succsim_{-i}) \neq \psi(\succsim'_i,\succsim_{-i}) $. Then, as in the proof of Lemma \ref{thm2}, this means that there exist infinite preference profiles, which is a contradiction. So it must be that $ \psi = \psi' $. Note that $ \psi' $ can be constructed for every $ \succsim^*_I\in  \mathcal{R}^*$ and every $ a\in C_{ir}(\succsim^*_I) $. So all allocations in $ C_{ir}(\succsim^*_I) $ are welfare-equivalent to $ \psi(\succsim^*_I) $. It means that $ C_{ir}(\succsim^*_I) $ is essentially single-valued and $ \psi $ selects an IR-core allocation when the IR-core is nonempty.
\end{proof}

\section{Conclusion}\label{section:conclusion}

In allocation problems where agents have outside options, there is a connection between the allocations found by a strategy-proof mechanism when agents vary the ranking of outside options in their preferences. We call this connection the contraction-invariance property. Relying on this property, we obtain two simple but useful results for strategy-proof mechanisms. The two results uncover a unified logic behind several existing results in different market design models. Our abstract model is general enough to accommodate many other economic environments. So it will be interesting to find more applications of our results in other economic models. This is left for future research.


\appendix

\section{Proof of Lemma \ref{lemma:random}}\label{appendix:proof:size-improve}

We first define some notations. For each $ o\in \tilde{O}\coloneqq O\cup \{\emptyset\} $ and each $ \succsim_i \in \mathcal{B} $, we define $ O(\succsim_i,o)\coloneqq \{o'\in O:o'\succsim_i o\} $; $ O(\succ_i,o) $ and $ O(\sim_i,o) $ are similar. We define $ O(\gg_i,\emptyset)\coloneqq \{o\in O:o\succ_i o'\succ_i \emptyset \text{ for some }o'\in O\} $.
Let $ O^1(\succsim_i),O^2(\succsim_i),\ldots, O^{K_i}(\succsim_i) $ be the partition of $ O(\succ_i,\emptyset) $ induced by $ \sim_i $. So each $ O^k(\succsim_i) $ is an indifference class for $ i $. We make the convention that the objects in $ O^k(\succsim_i) $ are better than those in $ O^{k+1}(\succsim_i) $. We suppress the dependence of $ K_i $ on $ \succsim_i $. For every allocation $ p $, $ O'\subset O $ and $ i \in I$, we define $ p_{i}[O'] \coloneqq\sum_{o\in O'}p_{i,o}$.

We prove the lemma by contradiction. Fix a preference profile $ \succsim_I $ and two IR allocations $ p $ and $ p' $ such that $ p $ is unbidominated and $ \norm{p'}>\norm{p} $. Define $ A\coloneqq\{o\in O: \norm{p'_o}>\norm{p_o}\} $. Since $ \norm{p'}>\norm{p} $, $ A $ is nonempty. For each $ o\in A $, there exists $ i\in I$ such that $ p'_{i,o}>p_{i,o} $. Suppose $ p $ and $ p' $ are upper-equivalent. It means that for all $ i\in I $ and all $ o\in O(\ggcurly_i,\emptyset) $, $ \sum_{o'\succsim_i o}p_{i,o'}= \sum_{o'\succsim_i o}p'_{i,o'} $. In other words, for all $ i $ and all $ 1\le k<K_i $, $ p_{i}[O^k(\succsim_i)]= p'_{i}[O^k(\succsim_i)] $. 
	 We will proceed by a few steps to find a contradiction.
	
	\textbf{Step one:} We generate a directed network according to the following procedure.
	
	\textit{Step 1}: Let each $ o\in A $ point to each $ i\in I $ such that $ p'_{i,o}>p_{i,o} $. It means that we create an edge $ o\rightarrow i $. Denote by $ I_1 $ the set of agents who are pointed by an object in $ A $. We have explained that $ I_1 $ is nonempty. 
	
	Let each $ i\in I_1 $ point to each $ o'\in O $ such that (1) $ p'_{i,o'}<p_{i,o'} $ and (2) there exists $ o\in A $ such that $ o \rightarrow i $ and $ o\succsim_i o' $. It means that we create an edge $ i\rightarrow o' $. Denote by $ O_1 $ the set of objects pointed by agents in $ I_1 $. It may happen that $ O_1 \cap A \neq \emptyset $.
	
	 \textit{Step 2}: Let each $ o\in O_1 $ point to each $ i\in I $, if $ o  $ has not pointed to $ i $, such that $ p'_{i,o}>p_{i,o} $. Denote the set of such $ i $ by $ I_2 $. It may happen that $ I_2 \cap I_1 \neq \emptyset $.	
	Let each $ i\in I_2 $ point to each $ o'\in O $, if $ i $ has not pointed to $ o' $, such that (1) $ p'_{i,o'}<p_{i,o'} $ and (2) there exists $ o\in A\cup O_1 $ such that $ o \rightarrow i $ and $ o\succsim_i o' $. Denote the set of such $ o' $ by $ O_2 $.
	
	$ \ldots $
	
	 \textit{Step $ k \ge 3$}: Let each $ o\in O_{k-1} $ point to each $ i\in I $, if $ o  $ has not pointed to $ i $, such that $ p'_{i,o}>p_{i,o} $. Denote the set of such $ i $ by $ I_k $. 	
	Let each $ i\in I_k $ point to each $ o'\in O $, if $ i $ has not pointed to $ o' $, such that (1) $ p'_{i,o'}<p_{i,o'} $ and (2) there exists $ o\in A \cup [\cup_{\ell=1}^{k-1} O_{\ell}] $ such that $ o \rightarrow i $ and $ o\succsim_i o' $. Denote the set of such $ o' $ by $ O_k $.

	\textit{Stop}: The procedure stops when there are no new edges created in some step.

	Since there are finite agents and finite objects, the procedure must stop in finite steps. In the generated directed network, denote the set of agents by $ I' $ and the set of objects by $ O' $. So $ A\subseteq O' $. From the procedure we know that for every $ i\in I' $, there exists an object $ o\in A $ and a directed path from $ o $ to $ i $, written as
	\[
	o \rightarrow i_1 \rightarrow o_1 \rightarrow i_2 \rightarrow o_2 \rightarrow \cdots \rightarrow i_{n} \rightarrow o_n \rightarrow i,
	\]such that, for all $ \ell\in\{1,\ldots,n\} $, $ p'_{i_\ell,o_{\ell-1}}> p_{i_\ell,o_{\ell-1}}$, $ p'_{i_\ell,o_{\ell}}< p_{i_\ell,o_{\ell}} $, and $ o_{\ell-1} \succsim_{i_\ell} o_\ell $ ($ o_0=o $), and $ p'_{i,o_n}>p_{i,o_n} $. If $ n=0 $, $ o $ directly points to $ i $. 
	
	\medskip
	\textbf{Step two:}
	We prove that for all $ i\in I' $, $ \norm{p_i}=1 $. Suppose for some $ i\in I' $, $ \norm{p_i}<1 $. Consider the above path from some $ o\in A $ to $ i $. Given that $ p $ and $ p' $ are IR, for all $ \ell\in\{1,\ldots,n\} $,  both $ o_{\ell-1} $ and $ o_\ell $ are acceptable to $ i_\ell $ because $ p'_{i_\ell,o_{\ell-1}}>0 $ and $ p_{i_\ell,o_{\ell}}>0 $, and $ o_n $ is acceptable to $ i $ because $ p'_{i,o_n}>0 $. Since $ \norm{p_o}<\norm{p'_o} $, it must be that $ \norm{p_o}<q_o $. So for a sufficiently small $ \epsilon>0 $, in the above path if we increase every $ p_{i_\ell, o_{\ell-1}} $ by $ \epsilon $, decrease every $ p_{i_\ell, o_{\ell}} $ by $ \epsilon$, increase $ p_{i,o_n} $ by $ \epsilon $, and do not change the other probabilities, then we obtain a new allocation that bidominates $ p $, contradicting the assumption that $ p $ is unbidominated.

	\medskip
	\textbf{Step three:} We prove that $ O\backslash O'\neq \emptyset $. The fact that $ \norm{p_i}=1 $ for all $ i\in I' $ implies that $ \sum_{i\in I'} \norm{p_i}\ge \sum_{i\in I'} \norm{p'_i}$. 
	Because $ \norm{p}<\norm{p'} $, we must have $ I\backslash I'\neq \emptyset $ and $ \sum_{i\in I\backslash I'} \norm{p_i}< \sum_{i\in I\backslash I'} \norm{p'_i}$.
	 For every $ i\in I\backslash I' $ and every $ o\in O' $, because $ o $ does not point to $ i $, $ p_{i,o}\ge p'_{i,o} $. If $ O'=O $,  we should have $ \sum_{i\in I\backslash I'} \norm{p_i}\ge \sum_{i\in I\backslash I'} \norm{p'_i}$, which is a contradiction.

	\medskip
	\textbf{Step four:} We prove that there exists $ o\in O\backslash O' $ such that $ o\in A\subseteq O' $, which will be a contradiction. For every $ i\in I' $, let $ o^i $ be one of the best objects in $ O' $ such that $ o^i\rightarrow i $. That is, there does not exist $ o'\in O' $ such that $ o'\succ_i o^i $ and $ o'\rightarrow i $. For every $ o\in O\backslash O' $, if $ o $ is not acceptable to $ i $, then $ p_{i,o}=p'_{i,o}=0 $, whereas if $ o $ is acceptable to $ i $, then either $ o^i\succsim_i o $ or $ o\succ_i o^i $.  If $ o^i\succsim_i o $, because $ i $ does not point to $ o $, it must be that $ p_{i,o}\le p'_{i,o} $. So,
	\begin{equation}\label{equation1}
	\sum_{o\in O\backslash O':o^i\succsim_i o }p_{i,o}\le \sum_{o\in O\backslash O':o^i\succsim_i o}p'_{i,o}.
	\end{equation}
	If $ o \succ_io^i$, suppose $ o\in O^k(\succsim_i) $ for some $ k $. It must be that $ k<K_i $. For all $ o'\in O'\cap O^k(\succsim_i) $, since $ o' $ does not point to $ i $ (because $ o'\succ_i o^i $), we must have $ p_{i,o'}\ge p'_{i,o'} $. So
	\[
	\sum_{o'\in O^k(\succsim_i)\cap O'}p_{i,o'}\ge \sum_{o'\in O^k(\succsim_i)\cap O'}p'_{i,o'}.
	\]
	
	Recall that we have assumed that $ p_{i}[O^k(\succsim_i)]= p'_{i}[O^k(\succsim_i)] $, which is equivalent to
	\begin{align*}
	\sum_{o'\in O^k(\succsim_i)\cap O'}p_{i,o'}+\sum_{o'\in O^k(\succsim_i)\cap (O\backslash O')}p_{i,o'}&=\sum_{o'\in O^k(\succsim_i)\cap O'}p'_{i,o'}+\sum_{o'\in O^k(\succsim_i)\cap (O\backslash O')}p'_{i,o'}.
	\end{align*}
	\medskip
	
	Since $ \sum_{o'\in O^k(\succsim_i)\cap O'}p_{i,o'}\ge \sum_{o'\in O^k(\succsim_i)\cap O'}p'_{i,o'} $, we obtain
	
	\[
	\sum_{o'\in O^k(\succsim_i)\cap (O\backslash O')}p_{i,o'}\le \sum_{o'\in O^k(\succsim_i)\cap (O\backslash O')}p'_{i,o'}.
	\]
	
	Summarizing over all $ o'\in  O^k(\succsim_i)\cap (O\backslash O') $ for all $ k $ such that $ o'\succ_i o^i $, we obtain
	\begin{equation}\label{equation2}
	\sum_{o'\in O\backslash O':o'\succ_i o^i}p_{i,o'}\le \sum_{o'\in O\backslash O':o'\succ_i o^i}p'_{i,o'}.
	\end{equation}
	
	Summarizing (\ref{equation1}) and (\ref{equation2}), we obtain
	
	\[
	\sum_{o'\in O\backslash O'}p_{i,o'}\le \sum_{o'\in O\backslash O'}p'_{i,o'}.
	\]
	
	Summarizing over all $ i\in I' $, we obtain
	\begin{equation}\label{equation3}
	\sum_{i\in I'}\sum_{o'\in O\backslash O'}p_{i,o'}\le \sum_{i\in I'}\sum_{o'\in O\backslash O'}p'_{i,o'}.
	\end{equation}

	We have proved that $ \sum_{i\in I\backslash I'} \norm{p_i}< \sum_{i\in I\backslash I'} \norm{p'_i}$, which is equivalent to
	\[
	\sum_{i\in I\backslash I'}\sum_{o\in O}p_{i,o}<\sum_{i\in I\backslash I'}\sum_{o\in O}p'_{i,o},
	\]
	or,
	\[
	\sum_{i\in I\backslash I'}\sum_{o\in O'}p_{i,o}+\sum_{i\in I\backslash I'}\sum_{o\in O\backslash O'}p_{i,o}<\sum_{i\in I\backslash I'}\sum_{o\in O'}p'_{i,o}+\sum_{i\in I\backslash I'}\sum_{o\in O\backslash O'}p'_{i,o}.
	\]
	
	For every $ i\in I\backslash I' $ and every $ o\in O' $, because $ o $ does not point to $ i $, $ p_{i,o}\ge p'_{i,o} $. So
	\[
	\sum_{i\in I\backslash I'}\sum_{o\in O'}p_{i,o}\ge \sum_{i\in I\backslash I'}\sum_{o\in O'}p'_{i,o}.
	\] 
	
	So we must have 
	\begin{equation}\label{equation4}
	\sum_{i\in I\backslash I'}\sum_{o\in O\backslash O'}p_{i,o}<\sum_{i\in I\backslash I'}\sum_{o\in O\backslash O'}p'_{i,o}.
	\end{equation}
	
	Summarizing (\ref{equation3}) and (\ref{equation4}), we obtain
	\[
	\sum_{i\in I}\sum_{o\in O\backslash O'}p_{i,o}<\sum_{i\in I}\sum_{o\in O\backslash O'}p'_{i,o},
	\]
	or,
	\[
	\sum_{o\in O\backslash O'} \norm{p_o}< \sum_{o\in O\backslash O'} \norm{p'_o}.
	\]
	
	So there exists $ o\in O\backslash O' $ such that $ \norm{p_o}< \norm{p'_o} $. But this means that $ o\in A\subseteq O' $, which is a contradiction.
	
\section{Examples for Section \ref{section:size:improve}}\label{appendix:size:improve}

In Example \ref{example:PS_is_improved}, we construct a mechanism that strictly size dominates PS. The mechanism is not truncation-invariant, but it is ordinally efficient and envy-free.

\begin{example}[PS is strictly size dominated by an ordinally efficient and envy-free mechanism]\label{example:PS_is_improved}
	There are four agents $ 1,2,3,4 $ and four objects $ a,b,c,d $. Consider a mechanism $ \psi $ that is different from PS in the preference profile $ \succsim_I $ in the following table, and coincides with PS in the other preference profiles. In the table we use $ o^x $ to mean that the relevant agent obtains a probability share $ x $ of object $ o $. Unacceptable objects are omitted from agents' preference lists.
	
	\begin{table}[!htb]
		\centering
		\begin{subtable}{0.45\textwidth}
			\centering
			\begin{tabular}{llll}
				$ \succsim_1 $ & $ \succsim_2 $ & $ \succsim_3 $ & $ \succsim_4 $ \\\hline
				$  a^{1/2} $ & $ a^{1/2} $ & $  b^{1/2} $ & $ b^{1/2} $\\
				$ c^{1/4} $ & $c^{1/4} $ & $c^{1/4} $ & $ c^{1/4} $\\
				&  & $d^{1/4} $ & $ d^{1/4} $\\
				& & $ a^0 $ & $  a^0 $
			\end{tabular}
			\subcaption{$ PS(\succsim_I) $}\label{table1}
		\end{subtable}
		\begin{subtable}{0.45\textwidth}
			\centering
			\begin{tabular}{llll}
				$ \succsim_1 $ & $ \succsim_2 $ & $ \succsim_3 $ & $ \succsim_4 $ \\\hline
				$ a^{1/2} $ & $ a^{1/2} $ & $ b^{1/2} $ & $ b^{1/2} $\\
				$ c^{1/2} $ & $ c^{1/2} $ & $c^0 $ & $ c^0 $\\
				&  & $ d^{1/2} $ & $ d^{1/2} $\\
				& & $ a^0 $ & $ a^0 $
			\end{tabular}
			\subcaption{$ \psi(\succsim_I) $}\label{table2}
		\end{subtable}
	\end{table}
	
	It is easy to see that $ \psi(\succsim_I) $ is ordinally efficient and envy-free. So $ \psi $ is an ordinally efficient and envy-free mechanism. From $PS(\succsim_I)$ to $ \psi(\succsim_I)$, $ 1 $ and $ 2 $ obtain more amounts of objects, and $ 3 $ and $ 4 $ obtain the same amounts of objects as before. So $ \psi$ strictly size dominates PS. But $ \psi $ is not truncation-invariant. If $ 3 $ truncates preferences at $ c $ in $ \succsim_I $, he will obtain the lottery $ (1/2 b,1/4 c )$, instead of the lottery $ (1/2 b )$ that is required by truncation-invariance.
\end{example}

In Example \ref{example:improved by PS}, we construct an IR and unbidominated mechanism that is not truncation-invariant, but is strictly size dominated by PS.

\begin{example}[An IR and unbidominated mechanism is strictly size dominated by PS]\label{example:improved by PS}
	There are four agents $ 1,2,3,4 $ and four objects $ a,b,c,d $. Consider a mechanism $ \psi $ that is different from PS in the preference profile $ \succsim_I $ in the following table, and coincides with PS in the other preference profiles. In the table we use $ o^x $ to mean that the relevant agent obtains a probability share $ x $ of object $ o $. Unacceptable objects are omitted from agents' preference lists.
	
	\begin{table}[!htb]
		\centering
		\begin{subtable}{0.45\textwidth}
			\centering
			\begin{tabular}{llll}
				$ \succsim_1 $ & $ \succsim_2 $ & $ \succsim_3 $ & $ \succsim_4 $ \\\hline
				$  a^{1/2} $ & $ a^{1/2} $ & $  b^{1/2} $ & $ b^{1/2} $\\
				$ c^{1/4} $ & $c^{1/4} $ & $c^{1/4} $ & $ c^{1/4} $\\
				&  & $d^{1/4} $ & $ d^{1/4} $\\
				& & $ a^0 $ & $  a^0 $
			\end{tabular}
			\subcaption{$ PS(\succsim_I) $}\label{table3}
		\end{subtable}
		\begin{subtable}{0.45\textwidth}
			\centering
			\begin{tabular}{llll}
				$ \succsim_1 $ & $ \succsim_2 $ & $ \succsim_3 $ & $ \succsim_4 $ \\\hline
				$ a^{1/2} $ & $ a^{1/2} $ & $ b^{1/2} $ & $ b^{1/2} $\\
				$ c^{0} $ & $ c^{0} $ & $c^{1/2} $ & $ c^{1/2} $\\
				&  & $ d^{0} $ & $ d^{0} $\\
				& & $ a^0 $ & $ a^0 $
			\end{tabular}
			\subcaption{$ \psi(\succsim_I) $}\label{table4}
		\end{subtable}
	\end{table}
	
	Because $ \psi(\succsim_I) $ is non-wasteful, $ \psi $ is unbidominated. From $ \psi(\succsim_I)$ to $PS(\succsim_I)$, $ 1 $ and $ 2 $ obtain more amounts of objects, and $ 3 $ and $ 4 $ obtain the same amounts of objects as before. So $ \psi $ is strictly size dominated by PS. But $ \psi $ is not truncation-invariant. If $ 3 $ truncates preferences at $ c $ in $ \succsim_I $, he will obtain the lottery $ (1/2 b,1/4 c )$, instead of the lottery $ (1/2b,1/2c )$ that is required by truncation-invariance.
\end{example}

\bibliographystyle{ecta}
	
	\begin{center}
		\begingroup
		\setstretch{1.0}
		\bibliography{reference}
		\endgroup
	\end{center}

\end{document}